\def\ps@pprintTitle{%
 \let\@oddhead\@empty
 \let\@evenhead\@empty
 \def\@oddfoot{}%
 \let\@evenfoot\@oddfoot}
\newcommand\myshade{85}
\colorlet{mylinkcolor}{OrangeRed}
\colorlet{mycitecolor}{YellowOrange}
\colorlet{myurlcolor}{Aquamarine}
\DeclareSymbolFont{largesymbolsA}{U}{txexa}{m}{n}
\DeclareMathSymbol{\varprod}{\mathop}{largesymbolsA}{16}
\newtheorem{defn}{Definition}[section]
\newtheorem{theorem}{Theorem}[section]
\newtheorem{proposition}{Proposition}[section]
\begin{document}
%%%%%%%%%%%%%%%%
\begin{frontmatter}

\title{Network Valuation in Financial Systems}

\author[a4]{Paolo Barucca}
%\EMAIL{pb@lims.ac.uk}
\author[a3]{Marco Bardoscia \fnref{fn1}}
%\EMAIL{mb@lims.ac.uk}
\author[a4,a5,a10]{Fabio Caccioli}
%\EMAIL{f.caccioli@ucl.ac.uk}
\author[a8]{Marco D'Errico \fnref{fn2}}
\author[a11]{Gabriele Visentin}
%\EMAIL{marco.derrico@uzh.ch, gabriele.visentin@uzh.ch}
\author[a6,a7,a9,a2]{Guido Caldarelli}
%\EMAIL{guido.caldarelli@imtlucca.it}
\author[a1]{Stefano Battiston}
%\EMAIL{stefano.battiston@uzh.ch}

\address[a4]{Department of Computer Science, University College London, WC1E 6BT London, United Kingdom}
\address[a3]{Bank of England, EC2R 8AH London, United Kingdom}
\address[a5]{Systemic Risk Centre, London School of Economics, WC2A 2AE London, United Kingdom}
\address[a10]{London Mathematical Laboratory, London WC 8RH, UK}
\address[a8]{European Systemic Risk Board Secretariat, European Central Bank, 60640 Frankfurt am Main, Germany}
\address[a11]{Institute of Mathematics, University of Zurich, 8057 Zurich, Switzerland}
\address[a6]{IMT School for Advance Studies Lucca, 55100 Lucca, Italy}
\address[a7]{Istituto dei Sistemi Complessi (CNR) UoS Sapienza, Dipartimento di Fisica, Sapienza Universit\`{a} di Roma, 00185 Rome, Italy}
\address[a9]{European Centre for Living Technology, Universit\`{a} di Venezia Ca'\ Foscari, 30124 Venice, Italy}
\address[a2]{London Institute for Mathematical Sciences, W1K 2XF London, United Kingdom}
\address[a1]{Department of Banking and Finance, University of Z\"{u}rich, 8050 Z\"{u}rich, Switzerland}

\fntext[fn1]{Any views expressed are solely those of the author(s) and so cannot be taken to represent those of Bank of England or to state Bank of England policy.}
\fntext[fn2]{The paper should not be reported as representing the views of the European Central Bank (ECB) or the European Systemic Risk Board (ESRB), or any of their member institutions. The views are those of the authors.}

\begin{abstract}
We introduce a general model for the balance-sheet consistent valuation of interbank claims within an interconnected financial system. Our model represents an extension of clearing models of interdependent liabilities to account for the presence of uncertainty on banks' external assets. At the same time, it also provides a natural extension of classic structural credit risk models to the case of an interconnected system. 
We characterize the existence and uniqueness of a valuation that maximises individual and total equity values for all banks. We apply our model to the assessment of systemic risk, and in particular for the case of stress-testing.
Further, we provide a fixed-point algorithm to carry out the network valuation and the conditions for its convergence. 

\bigskip
\noindent \textbf{Keywords}: Financial networks, Contagion, Systemic risk, Credit risk, Mark-to-market losses

\end{abstract}

\end{frontmatter}

%\linenumbers

\section{Introduction}

\noindent Uncertainty and interdependence are two fundamental features of financial systems. 
While uncertainty\footnote{The term uncertainty is used here in its generic sense of lack of certainty over future values, regardless of whether the probability distribution of such values is known or not.} over the future value of assets is traditionally very central in the financial literature \citep{merton1974pricing,black1976valuing,leland1996optimal,duffie1999modeling,lando2009credit}, interdependence of financial claims' values, especially in the increasingly  interconnected financial system, has been investigated only more recently \citep{rochet1996Interbank,freixas2000systemicrisk,allen2001financial,gray2010measuring}, taking center stage mostly after the 2008 financial crisis \citep{stiglitz2010risk,allen2013systemic,lewandowska2015otc,elliott2014financial,acemoglu2015systemic,glasserman2015likely,amini2016resilience,battiston2016price,bardoscia2017pathways}. 

The combined presence of both uncertainty and interdependence is the most relevant situation in practice, and yet the valuation of assets in this case remains an open problem in financial economics. 
Indeed, in their daily business, financial institutions need to assign an economic value to the claims they hold on their counterparties. 
For example, whether the counterparties will actually pay their obligations at maturity will depend on the counterparties' financial situation at maturity, which is known by other institutions only with some uncertainty, for instance because of exogenous changes in the value of their loans to the real sector. 
The obligations of, say, institution A are assets for A's creditors.
If more information becomes available on A, its creditors mark to market those assets, incorporating such information in the valuation of their own balance sheets.
As more information is now available on A's creditors, the creditors of A's creditors will in turn update the valuation of their own assets, and so on.

Neglecting either interconnectedness or uncertainty may lead to misestimation of systemic risk. 
On the one hand, not accounting for interconnectedness amounts to 
considering only direct exposures, ignoring potential indirect exposures to counteparties of counterparties, and so on. 
This, in turn, may be reflected in the overvaluation of counterparties' obligations and in incentives for excessive risk taking. 
On the other hand, neglecting uncertainty implies to consider that losses cannot materialize before the maturity. 
In contrast, according to the Bank for International Settlements (BIS), the largest part of losses suffered by financial institutions during the financial crisis was not due to actual counterparties' defaults, but to the mark-to-market re-evaluation of obligations following the deterioration of counterparties' creditworthiness.\footnote{The Basel Committee on Banking Supervision states that ``roughly two-thirds of losses attributed to counterparty credit risk were due to Credit Valuation Adjustment (CVA) losses and only about one-third were due to actual defaults'', see \cite{basel2011basel}.} 
The case of AIG in 2008 illustrates the distinction between the propagation of realized losses and ``mark-to-market'' losses, whereby the deterioration of institutions creditworthiness can spread through the network well before the maturity of the contracts \citep{glasserman2016contagion}.

Our main contribution is to introduce a framework to perform asset valuation taking into account both the interdependence of balance sheets connected by a network of so-called interbank assets and the uncertainty on future values of banks' external (i.e.\ non-interbank) assets. 
The framework abstracts away from the details of how the valuation is performed. 
Such details are encapsulated into \emph{valuation functions}. Once valuation functions are chosen, the framework reduces to a specific model.  
The framework includes two families of models.
If the valuation is performed  \emph{strictly before} the maturity of interbank claims, we have a proper model for ex-ante valuation.
If the valuation is performed \emph{at} the maturity, the model describes the clearing of interbank claims.
Here we provide general results that apply to all models compatible with the framework. 
In particular we cast the problem into a set of fixed-point equations for the valuation of institutions' equities.
We prove the existence of a solution that is optimal for all institutions, and we provide a simple algorithm to compute it with arbitrary precision.

Our second contribution is to show that, by suitably choosing the valuation functions, our framework recovers several models of clearing \citep{eisenberg2001systemic,rogers2013failure} and ex-ante valuation \citep{furfine2003quantifying,bardoscia2015debtrank} previously introduced in the literature.

Finally, through our framework we establish a specific connection between clearing models and ex-ante valuation models.
In particular, we start from the model introduced in \cite{eisenberg2001systemic} (EN) and we show that, by averaging its equations over the ex-ante uncertainty, we obtain an ex-ante valuation model in the sense implied by our framework. 
From this point of view the ex-ante valuation model can be seen as a forward-looking extension of the corresponding clearing model.

Our work is related to several strands of literature. 
First, adjusting the value of a contract between two counterparties to account for the risk that they might default is typically referred to in the literature as Credit Valuation Adjustment (CVA), see e.g.\ \cite{sorensen1994pricing,bielecki2013credit}. In its most basic form, CVA is computed by one institution as the risk-neutral expectation of the losses that it would incur if its counterparty were to default. 
In practice, in order to compute such expectation, one typically assumes a specific exogenous stochastic process for the probability of default of the counterparty. 
Moreover, as pointed out in \cite{banerjee2018pricing}, CVA only captures adjustments due to potential defaults of direct counterparties, but not of indirect counterparties.
In contrast, in our framework probabilities of default are computed endogenously (see section \ref{sec:ex-ante}) and account for the creditworthiness of both direct and indirect counterparties.\footnote{However, we stress that, since our valuations emerge as solution of a system of fixed-point equations, defaulting institutions do not default in a specific sequence, but default all at the same time. As a consequence, our approach is not well-suited to capture bilateral CVA \citep{gregory2009being,brigo2010bilateral}, which also accounts for the sequence in which defaults might occur.}

Second, because our framework includes clearing models, it is naturally related to the literature on the clearing of payments between institutions with mutual obligations.
The most widely used model for clearing payments is the EN model, which has been extended to the case of non-zero bankruptcy costs by \cite{rogers2013failure} (RV), cross-holdings of equities \citep{suzuki2002valuing}, an arbitrary seniority structure of claims \citep{fischer2014no-arbitrage}, and time-varying balance sheets \citep{banerjee2018dynamic}.
Both EN and RV can be explicitly recovered from our framework. Early attempts to establish network valuation models can be traced back to the work of \cite{furfine2003quantifying}, who introduces a model for cascades of defaults where the value of an interbank asset is equal to its face value as long as the debtor has not defaulted, and it is equal to zero otherwise. 
Also the Furfine model can be explicitly recovered from our framework.
In models of cascades of defaults, the deterioration of credit worthiness of an institution does not have any consequence for its creditors as long as that institution does not default. 
To overcome this limitation, a simple mechanism, the so-called DebtRank, has been introduced in \cite{battiston2012debtrank,bardoscia2015debtrank,battiston2016leveraging}, based on the assumption that relative shocks to equities of debtors are linearly transmitted to interbank assets of creditors. 
Further extensions generalize the model to non-linear transmission mechanisms \citep{bardoscia2015distress,bardoscia2017pathways}.
DebtRank can also be explicitly recovered from our framework.
All the aforementioned models are models of direct contagion, where shocks are propagated via direct exposures. We note that propagation of shocks in absence of defaults can occur because of indirect contagion, for instance in the case of overlapping portfolios and fire-sales \citep{cifuentes2005liquidity,caccioli2014stability,amini2016uniqueness,cont2017fire,feinstein2017financial}.

Recently, \cite{veraart2018distress} has followed an approach similar to ours, but in which the valuation mechanism is introduced as an assumption.
In contrast, here, and similarly to \cite{elsinger2005using,elsinger2006risk,fischer2014no-arbitrage}, the valuation mechanism is derived from a clearing mechanism. 
In \cite{elsinger2005using,elsinger2006risk,fischer2014no-arbitrage} the valuation can be performed only by an agent who has complete knowledge of all institutions' interbank assets.
Similar set-ups are presented in \cite{collin2004general} and \cite{cossin2007credit}.
However, in practice, interbank assets encode financial exposures between institutions and are therefore confidential. 
Often, even regulators have only a partial view of the detailed structure of interbank assets.
Indeed, a separate strand of literature is focused specifically on trying to ``reconstruct'' interbank assets from publicly available information \citep{anand2015filling,gandy2016bayesian,cimini2015systemic,squartini2017network,squartini2018reconstruction} or to assess the impact of their misestimation \citep{feinstein2018sensitivity}.

In line with the above consideration, in our framework institutions are assumed to need only knowledge of their own interbank assets, while the valuation is performed collectively by all institutions.
The basic idea is that each institution performs a valuation of its interbank assets, which is reflected in the value of its equity.
Counterparties that hold claims towards this institution, in turn, update the valuation of their own interbank assets, which is eventually reflected in the value of their own equities.
Our paper provides a characterization of the solutions to this valuation process.

\section{Framework} \label{sec:framework}
\noindent We consider a financial system consisting of $n$ institutions (for brevity ``banks'' hereafter) engaging in credit contracts with each other. 
These contracts mature at time $T$, while banks perform a valuation of their assets at time $t \leq T$.
At time $t$, we denote with $L_{ij}(t)$ the book value of the liability of bank $i$ towards bank $j$ and with $A_{ji}(t)$ the book value of the corresponding asset of bank $j$, with $A_{ji}(t) = L_{ij}(t)$ for consistency.
We refer to these quantities as interbank assets and liabilities.

We point out that, in general, book values of interbank assets and liabilities can depend on time, but they are deterministic and, crucially, do not incorporate any information about the creditworthiness of counterparties. 
In this respect, one could think of $A_{ji}(T)$ as the amount that $j$ is expected to recover from $i$ at maturity, if $i$ is not in default.
Similarly, if no additional contracts are stipulated between bank $i$ and $j$ in the period from $t$ to $T$, one could think of $A_{ji}(t)$ as the discounted value of $A_{ji}(T)$. 
Banks also have external, i.e.\ non-interbank, assets and liabilities. 
For example, external assets include loans to the real sector, while external liabilities include deposits.
At time $t$, we denote with $A_i^e(t)$ the book value of external assets of bank $i$ and with $L_i^e(t)$ the book value of its external liabilities.
External liabilities are deterministic, while external assets follow a stochastic process. 
Each bank observes its external assets at the valuation time $t$, but they are in general unknown at any time between $t$ (excluded) and the maturity $T$ (included, unless $t = T$).\footnote{Therefore, with a slight abuse of notation we denote with $A_i^e(t)$ the realisation at $t$ of the stochastic process of external assets of bank $i$, while for $s > t$ we denote with $A_i^e(s)$ the random variable corresponding to external assets of bank $i$ at time $s$.}
Finally, we denote with $M_i(t)$ the book value of the equity of bank $i$ at time $t$, i.e.\ the difference between its total assets and liabilities, taken at their book value:
\begin{equation} \label{eq:naive_equity}
M_i(t) = A_i^e(t) - L_i^e(t) + \sum_{j=1}^n A_{ij}(t) - \sum_{j=1}^n L_{ij}(t) \, .
\end{equation}

At time $t$, banks perform a valuation of their own interbank assets. 
For, say, bank $i$, the purpose of such valuation is to incorporate any information about the creditworthiness of $i$'s debtors into the value of $i$'s interbank assets.
The valuation will depend in principle on the information available to $i$.
Without loss of generality, we write the valuation of interbank assets in the following way:
\begin{equation} \label{eq:cptyvaluation}
A_{ij}(t) \mathbb{V}_{ij} ( \mathbf{O}_i(t) ) \, ,
\end{equation}
where $\mathbf{O}_i(t)$ is the information, i.e.\ the list of variables and parameters used by $i$ to perform the valuation of interbank assets at time $t$.
We call $\mathbb{V}_{ij}$ \emph{interbank valuation function}, which, at this stage, is simply the ratio between the valuation of the interbank asset and its book value.
Because the purpose of $i$'s interbank valuation function is to account for the creditworthiness of $i$'s debtors, we can expect that $\mathbf{O}_i(t)$ includes information on them. 
The precise list of variables and parameters part of $\mathbf{O}_i(t)$ will depend on the specific valuation model, but in all cases those are deterministic, precisely because they are observed by $i$ at time $t$.
Also, we note that in general the functional form of $\mathbb{V}_{ij}$ depends explicitly on both $i$ and $j$. 
This reflects the fact that different banks could use different models to perform the valuation of their interbank assets, and also that one bank could use different models to perform the valuation of different interbank assets.

Similarly, banks perform also a valuation of their external assets. 
In this case we write:
\begin{equation} \label{eq:val_ext}
A_i^e(t) \mathbb{V}_i^e ( \mathbf{O}_i^e(t) ) \, .
\end{equation}
Analogously, $\mathbf{O}_i^e(t)$ is the information, i.e.\ the list of variables and parameters used by $i$ to perform the valuation of external assets at time $t$ and we call $\mathbb{V}_i^e$, i.e.\ the ratio between the valuation of external assets and their book value, \emph{external valuation function}.
The external valuation function can be used, for example, to account for the loss implied by the fire sale of external assets.
Let us imagine that banks were to target a certain leverage ratio and that, whenever they were to deviate from their target, they would deleverage by selling external assets below market price (a similar mechanism is described in \cite{greenwood2015vulnerable,battiston2016leveraging,cont2017fire}).
The external valuation function would then be interpreted as the discount at which external assets were sold.
Such discount factor would depend both on bank $i$'s leverage and on other quantities, e.g.\ the market price and liquidity of its external assets, which would be part of $\mathbf{O}_i^e(t)$.
Also in this case the precise list of variables and parameters part of $\mathbf{O}_i^e(t)$ will depend on the specific valuation model, but analogously to $\mathbf{O}_i(t)$ those are deterministic.

Banks do not perform a valuation of their liabilities. The rationale of this assumption is that banks are not allowed to discount their liabilities based on their own creditworthiness or on the creditworthiness of their creditors.
This is consistent with expecting that creditors will try to recover the full value of their claims towards their debtors.
By replacing book values of assets with their valuations in \eqref{eq:naive_equity} we obtain the equity valuation (for brevity equity hereafter) for bank $i$:
\begin{equation} \label{eq:balance_sheet_noequity}
E_i(t) = A_i^e(t) \mathbb{V}_i^e ( \mathbf{O}_i^e(t) )- L_i^e(t) + \sum_{j=1}^n A_{ij}(t) \mathbb{V}_{ij} ( \mathbf{O}_i(t) ) - \sum_{j=1}^n L_{ij}(t) \, .
\end{equation}

We assume that, for all banks, the total book value of interbank assets and liabilities,\footnote{The total book value of interbank assets of bank $i$ at time $t$ is equal to $\sum_j A_{ij}(t)$, while total book value of interbank liabilities is equal to $\sum_j L_{ij}(t)$.} as well as the book value of external assets and liabilities at time $t$ are public information.
For example, at each quarter $t$, those values could be taken from banks' financial accounts.
However, individual values of interbank assets (and therefore of interbank liabilities) are not public, and are known only by the two participating counterparties. 
In our framework this is indeed the only piece of private information.
We assume that equities as well, which incorporate the asset valuations, are also part of the public information. 
This would happen, for example, if banks communicated their equity valuations to their counterparties.
In the following, we explicitly highlight the dependence of valuation functions on equities by writing them as $\mathbb{V}_i^e ( \mathbf{E}(t) | \mathbf{O}_i^{e \prime}(t) )$ and $\mathbb{V}_{ij} ( \mathbf{E}(t) | \mathbf{O}'_i(t) )$, where $\mathbf{O}_i^{e \prime}(t)$ and $\mathbf{O}'_i(t)$ are the lists of variables and parameters used by $i$ at time $t$ to perform the valuation of, respectively, external and interbank assets, \emph{in addition} to equities.
Formally, $\mathbf{O}_i(t) = \mathbf{O}'_i(t) \cup \{\mathbf{E}(t)\}$ and $\mathbf{O}_i^e(t) = \mathbf{O}_i^{e \prime}(t) \cup \{\mathbf{E}(t)\}$.
This allows us to re-write \eqref{eq:balance_sheet_noequity} as:
\begin{equation} \label{eq:balance_sheet}
\begin{split}
E_i(t) =& \, A_i^e(t) \mathbb{V}_i^e (  \mathbf{E}(t) | \mathbf{O}_i^{e \prime}(t)  )- L_i^e(t) \\
&+ \sum_{j=1}^n A_{ij}(t) \mathbb{V}_{ij} (  \mathbf{E}(t) | \mathbf{O}'_i(t) ) - \sum_{j=1}^n L_{ij}(t) \, .
\end{split}
\end{equation}
The third term on the right-hand side of \eqref{eq:balance_sheet} accounts for the valuation of $i$'s interbank assets towards its \emph{direct} counterparties.
However, by solving \eqref{eq:balance_sheet} jointly for all $i$ we  account not only for the effect of direct counterparties, but also for genuine network effects arising from counterparties of counterparties, and so on.
In fact, the analogous of \eqref{eq:balance_sheet} for $i$'s counterparties includes the valuation of \emph{their} interbank assets towards their own direct counterparties, which are indirect counterparties for $i$.
Hence, the joint solution of \eqref{eq:balance_sheet} for all $i$ effectively accounts for the valuations of interbank assets of all indirect counterparties.

When $t < T$, banks perform a proper ex-ante valuation.
External assets follow a stochastic process and therefore that their (book) value at maturity is unknown at time $t$.
In turn, this generates uncertainty on banks' solvency at maturity.
Intuitively, the valuation of interbank assets (via valuation functions) incorporates creditors' estimate of the likelihood that their debtors will be able to meet their obligations at maturity, given this source of uncertainty.
Eq.\ \eqref{eq:balance_sheet} is also valid when $t = T$, i.e.\ if banks perform their valuation \emph{at} maturity.
In this case, there is no uncertainty on the (book) value of external assets.
Nevertheless, creditors are still not fully certain about the value of their interbank assets, until they actually receive payments from their debtors. Whether their direct debtors are able to deliver such payments could depend on whether the debtors of their debtors are able to deliver payments. 
In this case, our framework is consistent with a setting in which the valuation happens when payments are due, but before they are delivered.
Interbank valuation functions then incorporate creditors' estimate of the likelihood that their debtors will deliver their payments.
In this sense, the joint solution of \eqref{eq:balance_sheet} amounts to clearing payments between banks.

In this section we will introduce a precise definition of valuation function that will allow us to prove general results that hold regardless of their specific functional form.
Those results, derived in section \ref{sec:main_results}, will not rely on any further assumption.
In section \ref{sec:ex-ante} we will derive a specific set of valuation functions and we will discuss its economic meaning in detail.
As anticipated, the information used by banks to perform the valuation depends on the specific valuation model. 
For example, for the interbank valuation functions derived in section \ref{sec:ex-ante}, $\mathbf{O}'_i(t)$ will include external assets of $i$'s counterparties at time $t$, their volatilities, and distance to maturity $T - t$.

In principle a valuation function can depend on the equities of all banks, but in almost all the examples that we will make their dependence on equities will be much simpler.
Because external assets, by definition, are independent of any specific counterparty, in most cases external valuation functions will depend only on the equity of the bank performing the valuation, i.e.\ \eqref{eq:val_ext} will read $\mathbb{V}_i^e (  E_i(t) | \mathbf{O}_i^{e \prime}(t) )$.
Analogously, because interbank valuation functions are meant to capture the credit risk of interbank assets, interbank valuation functions will depend only on the equity of the debtor, i.e.\ \eqref{eq:cptyvaluation} will read $\mathbb{V}_{ij} ( E_j(t) | \mathbf{O}'_i(t) )$.
Nevertheless, the results that we prove in section \ref{sec:main_results} apply to the more general case in which valuation functions depend on the equities of any subset of banks.
This motivates the following definition.

\begin{defn} \label{eq:val_fun}
Given an integer $q \leq n$, a function $\mathbbm{V}: \mathbb{R}^q \to [0, 1]$ is called a \emph{feasible valuation function} if and only if:
\begin{enumerate}
\item it is nondecreasing: $\mathbf{E} \leq \mathbf{E}^\prime$ $\Rightarrow$ $\mathbbm{V}(\mathbf{E}) \leq \mathbbm{V}(\mathbf{E}^\prime), \forall \mathbf{E}, \mathbf{E}^\prime \in \mathbb{R}^q$
\item it is continuous from above.
\end{enumerate}
\end{defn}

The first observation is that a feasible valuation function takes values between zero and one.
This, combined with \eqref{eq:cptyvaluation} and \eqref{eq:val_ext}, corresponds to assuming that both interbank and external assets cannot be valued at more than their book value.
This is obvious for interbank assets, which represent credit contracts, as the creditor cannot expect to recover more than the book value of the contract.
As already mentioned, the external valuation function will allow us to model bankruptcy costs due to fire sales. 
From this point of view this assumption is consistent with the fact that a bank cannot expect to profit from selling illiquid assets.

The second observation is that a feasible valuation function is non-decreasing.
For interbank assets this corresponds to assuming that credit contracts are non-speculative, in the sense that the valuation of interbank assets of one bank cannot increase because the equity of another bank, for example of one of its debtors, has decreased.\footnote{Such notion of non-speculative contracts is similar to the ones introduced in \cite{schuldenzucker2019default} and further extended in \cite{banerjee2019impact}.}
We have already mentioned that in all cases considered in sections \ref{sec:relation} and \ref{sec:ex-ante} interbank valuation functions depend only on the equity of the debtor. 
In those specific cases we have also that $\lim_{E_j(t) \to +\infty} \mathbb{V}_{ij} ( E_j(t) | \mathbf{O}'_i(t) ) = 1$.
This simply means that, when the equity of bank $j$ is very large, bank $i$ will deem bank $j$ so creditworthy that the corresponding interbank asset is taken at book value.
However, this property is not explicitly required by Definition \ref{eq:val_fun} and indeed it is not necessary for the results in section \ref{sec:main_results}.
In this case, one could interpret the expression $A_{ij}(t) \left[ 1 - \mathbb{V}_{ij} ( E_j(t) | \mathbf{O}'_i(t) ) \right]$ as CVA losses. In fact, this is the difference between the book value of $i$'s interbank assets towards $j$ and its valuation incorporating the information about $j$'s creditworthiness.  
Indeed, it is equal to zero when the interbank valuation function is equal to one, corresponding to the case in which no credit valuation adjustment needs to be applied.
For external assets this means that their value cannot be boosted by a decrease in the equity of the bank that holds them (or of any other bank).

The assumption of continuity from above is mainly technical and allows us to deal with the corner case of discontinuous valuation functions.
Indeed, most valuation functions that we will introduce are continuous (i.e.\ both from above and from below).
A discontinuity of the valuation function corresponds to a finite jump in the valuation of assets following an infinitesimal change in the value of equities.
For example, let us imagine the case in which the interbank valuation function captures the extremely simplified situation in which creditors take interbank assets at book value as long as their debtor has positive equity, while they value interbank asset as worthless otherwise.
What shall a creditor do when the equity of one of its debtors is exactly equal to zero, i.e.\ when assets and liabilities of that debtor are exactly equal?
Continuity from above implies that in this case the creditor should still take interbank assets at book value.

Since all valuation functions take values in the interval $[0, 1]$, all equities $E_i(t)$ are bounded both from below and from above as follows:
\begin{equation}\label{eq:consistency}
m_i(t) \equiv - L_i^e(t) - \sum_{j=1}^n L_{ij}(t) \leq E_i(t) \leq M_i(t) \, .
\end{equation}
By introducing the following map:
\begin{subequations}
\begin{equation}
\Phi \; : \; \varprod_{i=1}^n [m_i(t),M_i(t)] \to \varprod_{i=1}^n [m_i(t),M_i(t)]
\end{equation}
\begin{equation}
\Phi = (\Phi_1, \ldots, \Phi_n)
\end{equation}
\begin{equation} \label{eq:extended}
\begin{split}
\Phi_i(\mathbf{E}(t)) =& \, A_i^e(t) \mathbb{V}_i^e (\mathbf{E}(t) | \mathbf{O}_i^{e \prime}(t) ) - L_i^e(t) \\
&+ \sum_{j=1}^n A_{ij}(t) \mathbb{V}_{ij}(\mathbf{E}(t) | \mathbf{O}'_i(t) ) - \sum_{j=1}^n L_{ij}(t) \, ,
\end{split}
\end{equation}
\end{subequations}
the set of equations \eqref{eq:balance_sheet} can be rewritten in compact form:
\begin{equation} \label{eq:compact}
\mathbf{E}(t) = \Phi(\mathbf{E}(t)) \, .
\end{equation}

Therefore, performing the valuation reduces to solving the fixed-point equation for $\mathbf{E}(t)$.
The valuation functions computed at the fixed point can be interpreted as network-adjusted discount factors. 
The usual notion of discount factor captures the fact that the present value of an asset is different from its future value. 
Valuation functions account also for both the direct counterparty risk and for the network effects, which are fully incorporated in the valuations at the fixed point.
In order to implement a consistent network-adjusted valuation of interbank claims it is essential to prove the existence of solutions of (\ref{eq:compact}).

\section{Main results} \label{sec:main_results}
\noindent We now outline the most general results, which apply to generic valuation functions. Proofs are reported in the Appendix.

\begin{theorem}[Existence of greatest and least solution] \label{Tarski}
If all valuations functions in the map $\Phi$ take values in $[0,1]$ and are non-decreasing, the set of equations \eqref{eq:compact} admits a greatest solution $\mathbf{E}^{\mathrm{max}}(t)$ and a least solution $\mathbf{E}^{\mathrm{min}}(t)$.
\end{theorem}
\noindent The result above implies that the set of solutions is non-empty and that for any solution $\mathbf{E}^*(t)$, $E^{\mathrm{min}}_i(t)\leq E^*_i(t)\leq E^{\mathrm{max}}_i(t)$, for all $i$.
Within the set of solutions, the greatest solution is the most desirable outcome for all banks, as it simultaneously minimizes individual and total losses. 
In contrast, the least solution corresponds to the worst case scenario, as it simultaneously maximizes individual and total losses.
Let us explicitly note that every solution $\mathbf{E}^*(t)$ of \eqref{eq:compact} corresponds to a fixed point of the iterative map
\begin{equation} \label{eq:iter_map}
\mathbf{E}^{(k+1)}(t) = \Phi(\mathbf{E}^{(k)}(t)) \, ,
\end{equation}
and viceversa. Eq.\ \eqref{eq:iter_map} defines the Picard iteration algorithm and in principle provides a method to compute the solutions with arbitrary precision, as we will show in the following. 

Iterating the map starting from an arbitrary $\mathbf{E}^{(0)}(t)$ does not guarantee that the solutions $\mathbf{E}^{\mathrm{max}}(t)$ or $\mathbf{E}^{\mathrm{min}}(t)$ can be attained. In fact different solutions of \eqref{eq:compact} can be found depending on the chosen starting point. Moreover, some solutions might be \emph{unstable}, in the sense that, while still satisfying \eqref{eq:compact}, choosing a starting point for the Picard iteration algorithm arbitrary close to (but not equal to) such solutions, may result in the iterative map converging to another solution of \eqref{eq:compact}.
The problem of finding the least and greatest solution this problem is solved by the following two theorems.

\begin{theorem}[Convergence to the greatest solution] \label{th:convergence_greatest}
If all valuation functions in the map $\Phi$ are feasible and if $\mathbf{E}^{(0)}(t) = \mathbf{M}(t)$, then:
\begin{enumerate}
\item the sequence $\{ \mathbf{E}^{(k)}(t) \}$ is monotonic non-increasing: $\forall k \geq 0$, $\mathbf{E}^{(k+1)}(t) \leq \mathbf{E}^{(k)}(t)$,
\item the sequence $\{ \mathbf{E}^{(k)}(t) \}$ is convergent: $\lim_{k\to \infty} \mathbf{E}^{(k)}(t) = \mathbf{E}^{\infty}(t)$,
\item $\mathbf{E}^{\infty}(t)$ is a solution of \eqref{eq:compact} and furthermore $\mathbf{E}^{\infty}(t) = \mathbf{E}^{\mathrm{max}}(t)$.
\end{enumerate}
\end{theorem}
\noindent Theorem \ref{th:convergence_greatest} shows that, if the starting point of the iteration is $\mathbf{E}^{(0)}(t) = \mathbf{M}(t)$, which corresponds to taking all assets at their book value, the iterative map \eqref{eq:iter_map} converges to the greatest solution $\mathbf{E}^{\mathrm{max}}(t)$. Theorem \ref{th:convergence_greatest} guarantees that for all $\epsilon > 0$, there exists $K(\epsilon)$ such that for all $k > K(\epsilon)$ we have that $|| \mathbf{E}^{(k)}(t) - \mathbf{E}^{\mathrm{max}}(t) || < \epsilon$. In other words, once a precision $\epsilon$ has been chosen, starting from the book values of equities $\mathbf{M}(t)$, and after a finite number of iterations the Picard algorithm provides equities \eqref{eq:iter_map} that are indistinguishable from the greatest solution, within precision $\epsilon$. However, $K(\epsilon)$ is not known a priori, and at every time step it has to be checked whether the desired precision $\epsilon$ has been attained.

Mutatis mutandis, it is possible to prove that:

\begin{theorem}[Convergence to the least solution] \label{th:convergence_least}
If all valuations functions in the map $\Phi$ take values in $[0,1]$, are non-decreasing, and continuous from below, and if $\mathbf{E}^{(0)}(t) = \mathbf{m}(t)$, then:
\begin{enumerate}
\item the sequence $\{ \mathbf{E}^{(k)}(t) \}$ is monotonic non-decreasing: $\forall k \geq 0$, $\mathbf{E}^{(k+1)}(t) \geq \mathbf{E}^{(k)}(t)$,
\item the sequence $\{ \mathbf{E}^{(k)}(t) \}$ is convergent: $\lim_{k\to \infty} \mathbf{E}^{(k)}(t) = \mathbf{E}^{\infty}(t)$,
\item $\mathbf{E}^{\infty}(t)$ is a solution of \eqref{eq:compact} and furthermore $\mathbf{E}^{\infty}(t) = \mathbf{E}^{\mathrm{min}}(t)$.
\end{enumerate}
\end{theorem}
\noindent Analogous considerations to the ones proved after Theorem \ref{th:convergence_greatest} also hold in this case, implying that Theorem \ref{th:convergence_least} provides an intuitive way to compute equities in the worst case scenario.
Taken together, Theorems \ref{th:convergence_greatest} and \ref{th:convergence_least}, provide a simple algorithmic way to check whether the solution of \eqref{eq:compact} is unique within numerical precision when valuation functions are continuous (both from above and below).

Let us now put these results in the context of the existing literature. In order to prove the existence of a solution, \cite{suzuki2002valuing} and \cite{fischer2014no-arbitrage} exploit the Brouwer-Schauder fixed-point theorem, which requires payments made by each firm to be a \emph{continuous} function of the payments made by all firms. The assumption of continuity does not allow to account for default costs. However, in \cite{suzuki2002valuing} and \cite{fischer2014no-arbitrage}, the iterative map is not required to be monotonic, allowing to model some derivatives having a specific functional form. Since the Brouwer-Schauder fixed-point theorem does not give any information about the structure of the solution space (e.g.\ the existence of a greatest and a least solution) it is important to have a unique solution.\footnote{Nevertheless, in \cite{schuldenzucker2019default} it is shown that in the presence of credit-default swaps there can be multiple solutions or no solution at all.} In order to prove uniqueness \cite{suzuki2002valuing} and \cite{fischer2014no-arbitrage} resort to the additional hypothesis that the ownership matrix (the analogous of our matrix of interbank assets) is strictly left substochastic, meaning that for \emph{any} given level of seniority of the cross-holdings of debt \emph{each} firm must also hold external liabilities with the same seniority. Following \cite{eisenberg2001systemic} (see also \cite{glasserman2016contagion}), we make use of the Knaster-Tarski fixed-point theorem instead, which requires valuation functions to be monotonic -- preventing a straightforward modeling of derivatives -- and not necessarily continuous. As a consequence, default costs and analogous mechanisms can be easily accommodated in our framework (see section \ref{sec:relation}). Through the Knaster-Tarski fixed-point theorem we prove, not only the existence of a solution, but also the existence of a greatest and a least solution. Remarkably, Theorem \ref{th:convergence_greatest} shows that the greatest solution is attained if the starting point of the valuation is the book value of claims, providing a clear prescription to perform the valuation even when multiple solutions exist.

From a practical perspective it may be relevant to understand when an algorithm can terminate in a finite number of iterations.
Such results are normally proved ad hoc for a given model, meaning that they make assumptions on the explicit functional form of the valuation function \citep{hain2015valuation}. 
In contrast, here we show that such result holds for a specific topology of the network of interbank liabilities, namely a DAG (Directed Acyclic Graph), regardless of the functional form of the interbank valuation functions.
A concrete application for which this result is especially relevant is the compression of exposures \citep{derrico2018compressing}.
In fact, although perfect compression is rarely achieved in practice, techniques for compressing exposures would ideally convert any interbank network into a DAG.
\begin{proposition}[DAG] \label{pr:dag}
If the matrix defined by interbank assets $A_{ij}(t)$ is the adjacency matrix of a DAG and $\mathbb{V}_i^e(E_i(t)) = 1$, $\forall i$:
\begin{enumerate}
\item the map \eqref{eq:iter_map} converges in a finite number of iterations,
\item the solution of \eqref{eq:compact} is unique.
\end{enumerate}
\end{proposition}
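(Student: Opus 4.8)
The plan is to exploit the fact that, once $\mathbb{V}_i^e \equiv 1$ for all $i$, the balance-sheet map \eqref{eq:extended} becomes \emph{triangular} with respect to a topological ordering of the DAG. Concretely, with the standing assumption $\mathbb{V}_{ij}(\mathbf{E}|\ldots)=\mathbb{V}_{ij}(E_j|\ldots)$, the term $A_i^e\mathbb{V}_i^e(E_i)$ degenerates to the constant $A_i^e$, so $\Phi_i(\mathbf{E})$ depends on $E_j$ only through those $j$ with $A_{ij}\neq 0$; that is, the ``dependency graph'' of $\Phi$ is precisely the graph whose adjacency matrix is $A$, which by hypothesis is acyclic. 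I would therefore relabel the banks according to a topological ordering, so that $A_{ij}\neq 0$ implies $i<j$; then $\Phi_i(\mathbf{E})$ depends only on the coordinates $E_{i+1},\ldots,E_n$, and in particular $\Phi_n \equiv A_n^e - L_n^e - \sum_j L_{nj}$ is constant.

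Granted this triangular structure, uniqueness follows by back-substitution: any solution $\mathbf{E}^*$ of \eqref{eq:compact} must satisfy $E_n^* = \Phi_n$, and then, descending in the topological order, $E_i^* = \Phi_i(E_{i+1}^*,\ldots,E_n^*)$ is uniquely forced once $E_{i+1}^*,\ldots,E_n^*$ are determined; hence two solutions coincide coordinate by coordinate. (Existence is already supplied by Theorem \ref{Tarski}, or equivalently by reading the same recursion as a definition of $\mathbf{E}^*$.)

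For finite-time convergence of the Picard iteration \eqref{eq:iter_map} from an arbitrary starting point $\mathbf{E}^{(0)}$ in the box, I would prove by induction on $k$ that $E_i^{(k)} = E_i^*$ for every $i\ge n-k+1$. The base case $k=1$ holds because $\Phi_n$ is constant, so $E_n^{(1)}=\Phi_n=E_n^*$. For the inductive step, for any $i\ge n-k$ the quantity $E_i^{(k+1)}=\Phi_i(\mathbf{E}^{(k)})$ depends only on the coordinates $E_{i+1}^{(k)},\ldots,E_n^{(k)}$, whose indices are all $\ge n-k+1$ and which therefore equal $E_{i+1}^*,\ldots,E_n^*$ by the inductive hypothesis; applying $\Phi$ once more gives $E_i^{(k+1)}=\Phi_i(E_{i+1}^*,\ldots,E_n^*)=E_i^*$. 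Taking $k=n$ yields $\mathbf{E}^{(n)}=\mathbf{E}^*$, and hence $\mathbf{E}^{(k)}=\mathbf{E}^*$ for all $k\ge n$: the iteration reaches the unique solution after at most $n$ steps, from any starting point (in particular from $\mathbf{M}$ or $\mathbf{m}$).

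The argument is essentially bookkeeping, so I do not expect a genuine obstacle; the one point that needs care is the precise role of the hypothesis $\mathbb{V}_i^e\equiv 1$ (together with the standing assumption that $\mathbb{V}_{ij}$ depends only on the borrower's equity). Without it, the term $A_i^e\mathbb{V}_i^e(E_i)$ makes $\Phi_i$ depend on $E_i$ — effectively a self-loop absent from any DAG — so that, even with $A$ acyclic, the $i$-th equation becomes a genuine scalar fixed-point problem in $E_i$ that need not resolve in finitely many Picard steps. I would therefore make explicit exactly where those hypotheses enter, namely in reducing $\Phi$ to a triangular map relative to the topological order.
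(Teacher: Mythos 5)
Your proof is correct and rests on the same core idea as the paper's (the acyclic structure makes the Picard iteration stabilize layer by layer), but the bookkeeping is organized differently and the argument is more self-contained. The paper partitions banks by maximum graph distance from the sink set $S_0 = \{i : A_{ij}=0\;\forall j\}$, shows by induction that the iteration from $\mathbf{M}$ converges in $d_{\mathrm{max}}$ steps and from $\mathbf{m}$ in $d_{\mathrm{max}}+1$ steps, and concludes uniqueness from $\mathbf{E}^+=\mathbf{E}^-$; this gives a sharper iteration count ($d_{\mathrm{max}}+1$ vs.\ your $n$) but leans on Theorems \ref{th:convergence_greatest} and \ref{th:convergence_least} to identify those limits (the latter nominally assuming continuity from below, though in the finite-step DAG case that hypothesis can be dispensed with). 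You instead relabel by a topological order so that $\Phi$ becomes explicitly triangular, prove uniqueness directly by back-substitution, and show convergence in at most $n$ steps from an \emph{arbitrary} starting point, which is slightly stronger and avoids invoking the convergence theorems altogether. You are also right to flag, more explicitly than the paper does, that the result needs both $\mathbb{V}_i^e \equiv 1$ and the standing restriction $\mathbb{V}_{ij}(\mathbf{E})=\mathbb{V}_{ij}(E_j)$: without the latter (e.g.\ a Rogers--Veraart-style $\mathbb{V}_{ij}(E_i,E_j)$), each $\Phi_i$ acquires an effective self-loop and the triangular reduction fails even when $A$ is a DAG.
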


\section{Relation with existing contagion models} \label{sec:relation}
\noindent We now highlight the generality of the framework outlined in section \ref{sec:framework} by presenting a few relevant examples. More specifically, we show that four different models well known in the literature can be recovered as particular cases. 
In the following we denote with $\mathbbm{1}_{x>0}$ the indicator function relative to the set defined by the condition $x>0$ and we denote with $(x)^+$ the positive part of $x$, i.e.\ $(x)^+ = (x + |x|) / 2$.

\begin{proposition}[Eisenberg and Noe] \label{pr:en}
If $t = T$ and:
\begin{enumerate}
\item $\mathbb{V}_i^e(E_i(T)) = 1$, $\forall i$,
\item $\mathbb{V}_{ij}(E_j(T)) = \mathbbm{1}_{E_j(T) \geq 0} + \left( \frac{E_j(T) + \bar{p}_j(T)}{\bar{p}_j(T)} \right)^+ \mathbbm{1}_{E_j(T)<0}$, $\forall i$, $j$
\end{enumerate}
where $\bar{p}_j(T)= L_j^e(T) + \sum_{i}L_{ji}(T)$, there is a one-to-one correspondence between the solutions of \eqref{eq:compact} and the solutions of the map $\Phi$ introduced in \cite{eisenberg2001systemic}.
\end{proposition}

From the proof in Appendix, \eqref{eq:en} implies that bank $i$ is receiving a fraction $L_{ij}(T)/ (L_j^e(T) + \sum_{i}L_{ji}(T))$ of bank $j$'s total payments, meaning that external and interbank liabilities have the same seniority.
From \eqref{eq:en_etilde} we can see that the equity is what is left after both external and interbank liabilities have been paid, and it is therefore less senior than both of them.

\begin{proposition}[Rogers and Veraart] \label{pr:rv}
If $t = T$ and:
\begin{enumerate}
\item $\mathbb{V}_i^e(E_i(T)) = 1$, $\forall i$,
\item $\mathbb{V}_{ij}(E_j(T)) = \mathbbm{1}_{E_j(T) \geq 0} + \left[(\alpha - \beta) \frac{A_j^e(T)}{\bar{p}_j(T)}  + \beta \left( \frac{E_j(T) + \bar{p}_j(T)}{\bar{p}_j(T)} \right)^+ \right] \mathbbm{1}_{E_j(T)<0}$, $\forall i$, $j$
\end{enumerate}
where $\bar{p}_j(T)= L_j^e(T) + \sum_{i}L_{ji}(T)$, there is a one-to-one correspondence between the solutions of \eqref{eq:compact} and the solutions of the map $\Phi$ introduced in \cite{rogers2013failure}.
\end{proposition}

Let us note that, unless $\alpha = \beta = 1$, $\mathbb{V}_{ij}$ is not a continuous function. 
When a bank defaults, there are two contributions to the payments to its counterparties.
First, its external assets discounted by a factor $\alpha - \beta$.
Second, its total assets $\left(E_i(\mathbf{p}^{*}(T)) + \bar{p}_i(T) \right)^+$ discounted by a factor $\beta$.
Given that total assets are the sum of interbank and external assets, putting the two terms together amounts to discounting external assets by a factor $\alpha$ and interbank assets by a factor $\beta$.
This means that, when a bank defaults, its external (interbank) assets will suddenly experience a relative loss of $\alpha - 1$ ($\beta - 1$), due e.g.\ to the necessity to liquidate them in a fire sale.

\begin{proposition}[Furfine] \label{pr:furfine}
If:
\begin{enumerate}
\item $\mathbb{V}_i^e(E_i(t)) = 1$, $\forall i$,
\item $\mathbb{V}_{ij}(E_j(t)) = \mathbbm{1}_{E_j(t) \geq 0} + R\mathbbm{1}_{E_j(t) < 0}$, $\forall i$, $j$,
\end{enumerate}
there is a one-to-one correspondence between the solutions of \eqref{eq:compact} and the solutions of the map $\Phi$ introduced in \cite{furfine2003quantifying}.
\end{proposition}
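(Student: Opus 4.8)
The plan is to follow the template of Proposition~\ref{pr:en}: write down explicitly the fixed-point problem underlying the contagion cascade of \cite{furfine2003quantifying}, and verify that under the stated valuation functions it coincides with \eqref{eq:compact}. As in the Eisenberg--Noe case the evaluation happens at maturity, $t = T_{ij}$ for all $i$, $j$, and external and interbank liabilities are due at the same time; the distinctive feature of Furfine's model is that the recovery rate $R \in [0,1]$ on interbank claims is \emph{exogenous and uniform}, so that a creditor of a defaulted bank recovers exactly the fraction $R$ of the face value of its claim, while a bank is declared in default as soon as the losses inherited from its defaulted counterparties exceed its book capital. Casting the round-by-round procedure as a fixed point, a default set $D$ is consistent precisely when
\begin{equation} \label{eq:furfine_cascade}
D = \Big\{ i \,:\, M_i - (1-R)\sum_{j \in D} A_{ij} < 0 \Big\} \, ,
\end{equation}
with $M_i$ the book equity \eqref{eq:naive:equity}, and the post-cascade equity of bank $i$ is $M_i - (1-R)\sum_{j \in D} A_{ij}$.

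The crux is then a one-line computation. Substituting $\mathbb{V}_i^e \equiv 1$ and $\mathbb{V}_{ij}(E_j) = \mathbbm{1}_{E_j \geq 0} + R\,\mathbbm{1}_{E_j<0}$ into \eqref{eq:extended} and using $\mathbbm{1}_{E_j \geq 0} = 1 - \mathbbm{1}_{E_j<0}$ yields
\begin{equation} \label{eq:furfine_map}
\Phi_i(\mathbf{E}) = A_i^e - L_i^e + \sum_{j} A_{ij} - \sum_j L_{ij} - (1-R)\sum_j A_{ij}\,\mathbbm{1}_{E_j<0} = M_i - (1-R)\sum_j A_{ij}\,\mathbbm{1}_{E_j<0} \, ,
\end{equation}
so that \eqref{eq:compact} becomes $E_i = M_i - (1-R)\sum_j A_{ij}\mathbbm{1}_{E_j<0}$ for every $i$. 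To a solution $\mathbf{E}^*$ associate the default set $D^* = \{ i : E_i^* < 0\}$; then $\sum_j A_{ij}\mathbbm{1}_{E_j^*<0} = \sum_{j\in D^*}A_{ij}$, hence $E_i^* = M_i - (1-R)\sum_{j\in D^*}A_{ij}$ and $i \in D^*$ iff $M_i - (1-R)\sum_{j\in D^*}A_{ij} < 0$, i.e.\ $D^*$ satisfies \eqref{eq:furfine_cascade}. Conversely, to a set $D$ satisfying \eqref{eq:furfine_cascade} associate $E_i := M_i - (1-R)\sum_{j\in D}A_{ij}$; \eqref{eq:furfine_cascade} gives $E_i < 0$ iff $i\in D$, whence $\sum_j A_{ij}\mathbbm{1}_{E_j<0} = \sum_{j\in D}A_{ij}$ and $\mathbf{E}$ solves \eqref{eq:compact}. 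These two assignments are mutually inverse — a solution determines its default set, and a consistent default set determines the equity vector through \eqref{eq:furfine_map} — which gives the claimed one-to-one correspondence; it also preserves the ordering, so that the greatest solution of \eqref{eq:compact} corresponds to the smallest consistent default set, i.e.\ the outcome of iterating the cascade from the face values $\mathbf{M}$ as in Theorem~\ref{th:convergence_greatest}.

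The mathematical content here is light, so the only real obstacle is notational: one must reconcile the primitives of \cite{furfine2003quantifying} — usually phrased through capital buffers, bilateral exposures and an explicit iterative cascade — with the balance-sheet quantities $M_i$, $A_{ij}$, $L_{ij}$ and with a fixed-point statement, paying attention to the tie-breaking convention at zero equity (here $E_j = 0$ counts as non-default, consistently with $\mathbbm{1}_{E_j\geq 0}$) and to the fact that $R$ is applied uniformly and is insensitive to the size of the counterparty's shortfall. It is worth noting that, $R$ being constant, $\mathbb{V}_{ij}$ is nondecreasing and continuous from above, so this example falls within the scope of Theorems~\ref{Tarski} and~\ref{th:convergence_greatest} (though not of Theorem~\ref{th:convergence_least} when $R<1$, the downward jump at zero equity playing the role of a default cost).
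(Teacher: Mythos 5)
Your proof is correct and follows the same underlying idea as the paper's, but with much more detail: the paper's proof is a single informal remark (``Furfine says pay in full if solvent, pay a fraction $R$ if not — the valuation function does exactly that''), whereas you actually carry out the substitution, rewrite $\Phi_i(\mathbf{E}) = M_i - (1-R)\sum_j A_{ij}\mathbbm{1}_{E_j<0}$, and construct the explicit bijection with consistent default sets. Your additional observations — that the map preserves the lattice order so the greatest solution corresponds to the smallest consistent default set, and that $\mathbb{V}_{ij}$ is continuous from above but not from below when $R<1$ — go beyond what the paper records but are correct and clarify where this example sits relative to Theorems~\ref{Tarski}--\ref{th:convergence_least}.
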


\begin{proposition}[Linear DebtRank] \label{pr:dr}
If:
\begin{enumerate}
\item $\mathbb{V}_i^e(E_i(t)) = 1$, $\forall i$,
\item $\mathbb{V}_{ij}(E_j(t)) = \min \left[ \frac{E_j^+(t)}{M_j(t)}, 1 \right]$, $\forall i$, $j$,
\end{enumerate}
there is a one-to-one correspondence between the greatest solution of \eqref{eq:compact} and the solution of the recursive map (linear version of DebtRank) introduced in \cite{bardoscia2015debtrank}.
\end{proposition}

In figure \ref{fig:ibeval} we plot several interbank valuation functions: EN (see Proposition \ref{pr:en}), Furfine (see Proposition \ref{pr:furfine}), Linear DebtRank (see Proposition \ref{pr:dr}), and ex-ante EN, which will be introduced in section \ref{sec:ex-ante}.

\begin{figure}
\centering
\includegraphics[width=0.6\columnwidth]{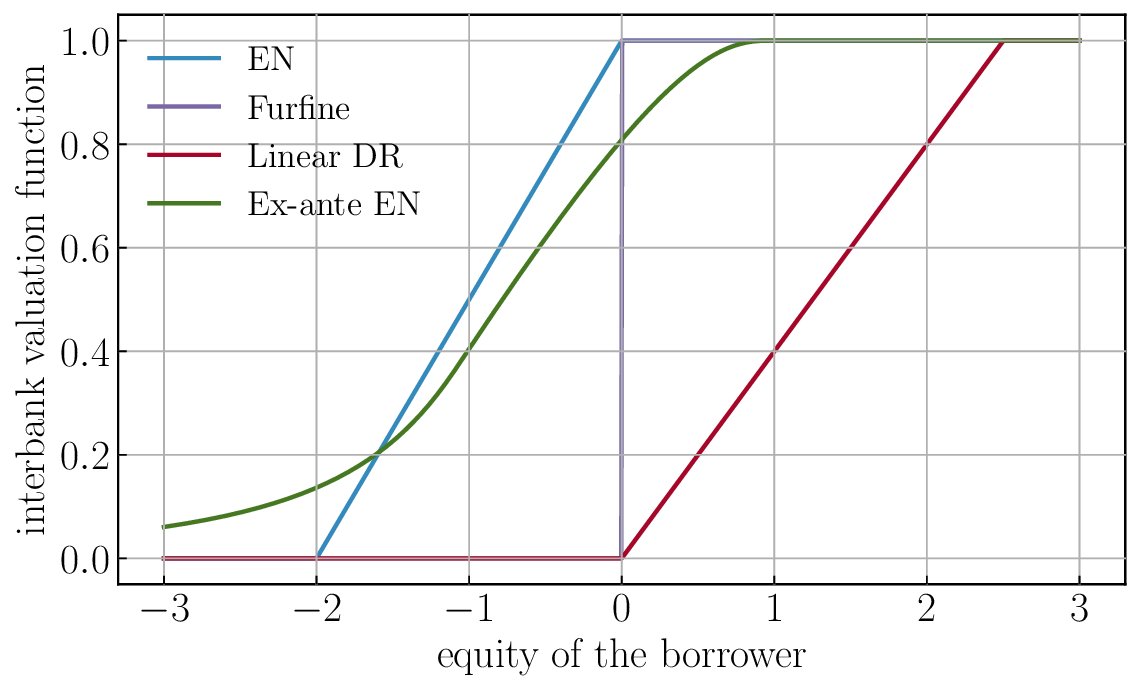}
\caption{Interbank valuation functions as a function of the equity of the borrower. Parameters as follows. EN: $\bar{p} = 2$, Furfine: $R = 1$, Linear DebtRank: $M = 2.5$, Ex-ante EN: $A^e = 1$, $\bar{p} = 2$, $\sigma = 1$.}
\label{fig:ibeval}
\end{figure}

\section{From clearing to ex-ante valuation} \label{sec:ex-ante}
\noindent On the one hand, as already remarked, clearing models allow to compute the payments that banks have to make to their counterparties at maturity. 
On the other hand, the valuation of corporate debt before maturity done by means of standard credit structural models allows each creditor only to account for their direct debtors, ignoring the indirect effect that the debtors of their debtors might have when creditors and debtors form a complex interconnected network. 
The aim of this section is to illustrate how the framework introduced in section \ref{sec:framework} can be used to bridge this gap.

In a nutshell, we take a clearing model and we show that computing expected values over the ex-ante uncertainty yields a proper ex-ante valuation model, in the sense specified in section \ref{sec:framework}.
Here we use the EN model as the starting clearing model, but in principle a different clearing model could be used as well.
As anticipated in section \ref{sec:framework}, external assets follow stochastic processes, and therefore banks face an ex-ante uncertainty on their value at maturity.
For simplicity and to make the notation clearer, we will also assume that the risk-free rate is constant and equal to zero.

The starting point is to perform a valuation of equities at time $t < T$ as in any other credit structural model, i.e.\ by taking the expected value of equities at $T$ over the unique martingale measure (EMM)\footnote{The existence of a unique EMM descends from assuming no arbitrage opportunities and a complete market. Those are standard assumptions in credit structural models. 
To the extent that one identifies expectations over the unique EMM with market values, one can say that market values computed in this way incorporate the information about the likelihood that counterparties default between $t$ and $T$.
We also note that in our framework equities are the difference between assets $A$ and liabilities $L$ and can therefore be negative. On the other hand, market prices of equities should take into account the limited liability of equity holders, who cannot go into negative equity. However, at maturity the market price of the difference between assets and liabilities can be obtained by summing the market prices of equities $\max(A - L, 0)$ to the market price of debt $\min(L, A)$ and by subtracting the book value of liabilities $L$. The argument is valid also before maturity, simply by taking the conditional expectations of equity and debt prices over the risk-neutral measure.} $\mathbb{Q}$, conditional on the filtration at time $t$:
\begin{equation} \label{eq:exp_q}
\mathbf{E}(t) = \mathbb{E}_{\mathbb{Q}}[\mathbf{E}(T) | \mathcal{F}(t)] \, .
\end{equation}
One possible method to compute the expected value on the the right-hand side of \eqref{eq:exp_q} is to perform a Monte Carlo simulation.
This is the approach proposed in \cite{fischer2014no-arbitrage}, and a variation of it has been used in \cite{elsinger2005using,elsinger2006risk}.
This requires: (i) to simulate a large number of trajectories of the stochastic processes associated with external assets up to maturity, (ii) for each simulated trajectory, to compute the solution of clearing equations with the simulated values of external assets at maturity, and (iii) to approximate the expected value with the sample average of the equities that solve the clearing equations.
However, the possibility of performing those steps relies on the implicit assumption that there is one agent who has complete knowledge of interbank assets.
The reason is that the solution of the clearing equations must be computed for a large number of potential realisations of external assets at maturity, which are unknown to individual banks at time $t$. 
Indeed, an agent with complete information is able to simulate the values of external assets at maturity \emph{and}, because she has full knowledge of interbank assets, is also able to solve all the corresponding clearing equations.
In contrast, as explained in section \ref{sec:framework}, in our approach every bank has knowledge only of their own interbank assets. 

We start by plugging the valuation functions from Proposition \ref{pr:en} into \eqref{eq:exp_q} for bank $i$:
\begin{equation} \label{eq:neva_av}
\begin{split}
E_i(t) =& \, \mathbb{E}_{\mathbb{Q}}[A_i^e(T) | \mathcal{F}(t)] - \mathbb{E}_{\mathbb{Q}}[L_i^e(T) | \mathcal{F}(t)] \\
&+ \sum_{j=1}^n \mathbb{E}_{\mathbb{Q}}[ A_{ij}(T) \mathbb{V}_{ij}^{(\mathrm{EN})}(E_j(T)) | \mathcal{F}(t)] -  \sum_{j=1}^n \mathbb{E}_{\mathbb{Q}}[ L_{ij}(T) | \mathcal{F}(t)] \, .
\end{split}
\end{equation}
Because liabilities are non-stochastic, we have that $\mathbb{E}_{\mathbb{Q}}[L_i^e(T) | \mathcal{F}(t)] = L_i^e(T)$ and that $\mathbb{E}_{\mathbb{Q}}[ L_{ij}(T) | \mathcal{F}(t)] = L_{ij}(T)$. 
As the risk-free rate is equal to zero, book values of liabilities do not need to be discounted at time $t$, meaning that $L_i^e(T) = L_i^e(t)$ and that $L_{ij}(T) = L_{ij}(t)$.
Both considerations also apply to interbank assets, implying that $\mathbb{E}_{\mathbb{Q}}[ A_{ij}(T) \mathbb{V}_{ij}^{(\mathrm{EN})}(E_j(T)) | \mathcal{F}(t)] = A_{ij}(t) \mathbb{E}_{\mathbb{Q}}[ \mathbb{V}_{ij}^{(\mathrm{EN})}(E_j(T)) | \mathcal{F}(t)]$.
As regards external assets, by definition of EMM we have that $\mathbb{E}_{\mathbb{Q}}[A_i^e(T) | \mathcal{F}(t)] = A_i^e(t)$. 
Eq.\ \eqref{eq:neva_av} now reads:
\begin{equation} \label{eq:neva_av_2}
E_i(t) = A_i^e(t) - L_i^e(t) + \sum_{j=1}^n A_{ij}(t) \mathbb{E}_{\mathbb{Q}}[ \mathbb{V}_{ij}^{(\mathrm{EN})}(E_j(T)) | \mathcal{F}(t)] - \sum_{j=1}^n L_{ij}(t) \, .
\end{equation}

We are now left with the task of computing the third term on the right-hand side of \eqref{eq:neva_av_2}, i.e.\ the expected value of $\mathbb{V}_{ij}^{(\mathrm{EN})}$:
\begin{equation} \label{eq:exante_val}
\mathbb{E}_{\mathbb{Q}}[ \mathbb{V}_{ij}^{(\mathrm{EN})}(E_j(T)) | \mathcal{F}(t)] = \mathbb{E}_{\mathbb{Q}}\left[\mathbbm{1}_{E_{j}(T)\geq0} + \left(\frac{E_{j}(T) + \bar{p}_j(T)}{\bar{p}_j(T)}\right)^+\mathbbm{1}_{E_{j}(T)<0} \Big| \mathcal{F}(t) \right] \, .
\end{equation}
Bank $i$ has to compute the expected value in \eqref{eq:exante_val} at time $t$, when the valuation is performed.
In general, we note that bank $j$ might have debtors itself.
This means that, for a fixed realisation of the stochastic processes on external assets, $E_j(T)$ might depend on the values of $j$'s interbank assets, which are not known to $i$.
Furthermore, $j$'s debtors might have debtors themselves (and so on), implying that $E_j(T)$ might in principle depend on all interbank assets.
As a consequence, bank $i$ will necessarily need to make an approximation when computing the expected value in \eqref{eq:exante_val}.
Here we assume that banks, because they have no knowledge of the debtors of their debtors, attribute the variation in the equities of their debtors to the variation in their external assets, i.e.\ $\mathbf{E}(T) \approx \mathbf{E}(t) + \mathbf{A}^e(T) - \mathbf{A}^e(t)$.
This means that the expected value in \eqref{eq:exante_val} becomes the expected value over the distributions of $A_j(T)$, the external assets at maturity, conditional on $A_j^e(t)$ the (observed) external assets at time $t$.
Moreover, after the expected value has been computed, the right-hand side of \eqref{eq:exante_val} will be an explicit function of $E_j(t)$.
Hence, \eqref{eq:neva_av_2} will have the same structure of \eqref{eq:balance_sheet}, i.e.\ provided that $\mathbb{E}_{\mathbb{Q}}[ \mathbb{V}_{ij}^{(\mathrm{EN})}(E_j(T)) | \mathcal{F}(t)]$ are feasible valuation functions, we will have an ex-ante valuation model.
As such, all results in section \ref{sec:main_results} will apply.
Theorems \ref{Tarski} and \ref{th:convergence_greatest} will ensure that there exists a greatest solution (therefore optimal for all banks) and that such solution can be computed with arbitrary precision using the Picard iteration algorithm \eqref{eq:iter_map}.
Indeed, the right-hand side of \eqref{eq:exante_val} is the expected value of a valuation function, which takes values between zero and one, and therefore it will also be between zero and one.
Similarly, the right-hand side of \eqref{eq:exante_val} is the expected value of a non-decreasing function of $E_j(T)$, but since $E_j(T) \approx E_j(t) + A^e_j(T) - A^e_j(t)$, it is also the expected value of a non-decreasing function of $E_j(t)$.  
Hence, the right-hand side of \eqref{eq:exante_val} will itself be a non-decreasing function of $E_j(t)$.
The continuity properties will in general depend on the measure $\mathbb{Q}$.
A specific example in which the right-hand side of \eqref{eq:exante_val} is continuous, and hence $\mathbb{E}_{\mathbb{Q}}[ \mathbb{V}_{ij}^{(\mathrm{EN})}(E_j(T)) | \mathcal{F}(t)]$ is a valuation function, will be discussed shortly.

Before we perform the explicit calculation of such valuation functions a few observations are in order. 
We stress that the approximation does not imply that $j$'s debtors (or other banks) do not have any impact on $i$. 
In fact, the impact of $j$'s debtors is accounted for in the equation for $E_j(t)$, which will feed into the equation for $E_i(t)$ when the fixed point of \eqref{eq:exante_val} is computed.
In practice, when \eqref{eq:exante_val} is solved iteratively, the first step of the algorithm will incorporate the effect of direct debtors into equities.
The second step will incorporate the effect of debtors of debtors, and so on.

Yet another way to interpret the approximation is to imagine that banks' valuations of interbank assets are individually risk-neutral. 
In fact, if $A_{ij}$ were the only non-zero interbank asset, the approximation would be exact because $E_j(T)$ would depend only on $A_j^e(T)$.
In this sense, bank $i$ would be computing the risk-neutral value of interbank assets, as if no other credit contract existed.

By defining $\Delta \mathbf{A}^e \equiv \mathbf{A}^e(T) - \mathbf{A}^e(t)$ and by introducing the following shorthands:
\begin{subequations} \label{eq:pd_rho}
\begin{equation} \label{eq:pd}
\begin{split}
p_j^D(E_j(t)) &= 1 - \mathbb{E}_{\mathbb{Q}}\left[\mathbbm{1}_{E_{j}(T) \geq 0} | \mathcal{F}(t) \right] \\
&= \mathbb{E}_{\mathbb{Q}}\left[\mathbbm{1}_{E_{j}(T)<0} | \mathcal{F}(t) \right] \\
&\simeq \mathbb{E}_{\mathbb{Q}}\left[\mathbbm{1}_{\Delta A^e_j < -E_j(t)} | A_j^e(t) \right]
\end{split}
\end{equation}
and\footnote{Since the book value of interbank liabilities does not change, we have that $\bar{p}_j(T) = \bar{p}_j(t)$.}:
\begin{equation} \label{eq:rho}
\begin{split}
\rho_j(E_j(t)) &= \mathbb{E}_{\mathbb{Q}}\left[\left(\frac{E_{j}(T) + \bar{p}_j(T)}{\bar{p}_j(T)}\right)^+\mathbbm{1}_{E_{j}(T)<0} \Big | \mathcal{F}(t) \right] \\
&\simeq\mathbb{E}_{\mathbb{Q}}\left[\left(\frac{E_{j}(t) + \Delta A^e_j + \bar{p}_j(t)}{\bar{p}_j(t)}\right)  \mathbbm{1}_{-\bar{p}_j(t) -E_j(t) \leq \Delta A^e_j < -E_j(t)} | A_j^e(t) \right] \, ,
\end{split}
\end{equation}
\end{subequations}
we can re-write \eqref{eq:exante_val} in the more compact form:
\begin{equation} \label{eq:genDRVal}
\mathbb{E}_{\mathbb{Q}}[ \mathbb{V}_{ij}^{(\mathrm{EN})}(E_j(T)) | \mathcal{F}(t)] \simeq 1 - p_j^D(E_j(t)) + \rho_j(E_j(t)) \, .
\end{equation}
From the second line of \eqref{eq:pd} we can see that $p_j^D(E_j(t))$ is, by definition, the probability that bank $j$ defaults at maturity. 
Analogously, from \eqref{eq:rho} we can see that $\rho_j(E_j(t))$ is an endogenous recovery rate.
In fact, when $E_{j}(T)<0$, $E_{j}(T) + \bar{p}_j(T)$ is equal to $j$'s total assets, which are smaller than its liabilities. 
Hence, $\rho_j(E_j(t))$ is equal to the (conditional) expected value of $j$'s assets at maturity when $j$ defaults divided by its total liabilities, i.e.\ the (conditional) expected value of the fraction of interbank assets that a creditor can expect to recover.
From \eqref{eq:genDRVal} we can see that the valuation function can be thought of as the expectation over a two-valued probability distribution: if the debtor $j$ does not default at maturity, $i$ will recover its interbank asset in full, while if $j$ defaults, $i$ will recover the endogenous recovery rate.
Thus, \eqref{eq:genDRVal} can be interpreted as a generalisation to endogenous recovery rates of the valuation mechanisms in \cite{bardoscia2015distress} (see (7) therein) and in \cite{bardoscia2017pathways} (see (2) in Supplementary Methods), in which the recovery rate is exogenous.

Finally, as we approach maturity, the valuation functions in \eqref{eq:genDRVal} approach the EN valuation functions in Proposition \ref{pr:en}.

\begin{proposition} \label{pr:exante_to_en}
In the limit in which the maturity is approached, i.e.\ $t \to T$, the interbank valuation function \eqref{eq:genDRVal} converges to the interbank valuation function of EN (Proposition \ref{pr:en}).
\end{proposition}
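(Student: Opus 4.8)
The plan is to pass the limit $t \to T$ inside the expectations in \eqref{eq:pd_rho}--\eqref{eq:genDRVal}, using that the increment $\Delta A_j^e = A_j^e(T) - A_j^e(t)$ vanishes as $t \to T$. First I would record the only probabilistic input that is needed: since $\bold{A}^e(t)$ is a (right-continuous) stochastic process, $\Delta A_j^e \to 0$ almost surely as $t \to T$; equivalently, for the diffusive processes used in the paper (e.g.\ the geometric Brownian motion of the Merton-type example) the variance of the increment tends to $0$, giving convergence in probability and, along a subsequence, almost surely. I would also fix throughout an argument $E_j \in \mathbb{R}$, treated as deterministic, since in the local valuation $E_j$ is the variable of the valuation function.

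Next, observe that the integrand in the first line of \eqref{eq:genDRVal}, namely $x \mapsto \mathbbm{1}_{E_j + x \geq 0} + \beta\left(\frac{E_j + x + \bar{p}_j}{\bar{p}_j}\right)^+\mathbbm{1}_{E_j + x < 0}$ evaluated at $x = \Delta A_j^e$, is uniformly bounded (by $\max(1,\beta)$), so dominated convergence applies with a constant dominating function. It therefore suffices to identify the almost-sure pointwise limit of this integrand as $\Delta A_j^e \to 0$. For $E_j \neq 0$ one has $\mathbbm{1}_{E_j + \Delta A_j^e \geq 0} \to \mathbbm{1}_{E_j \geq 0}$, whence $p_j^D(E_j) \to \mathbbm{1}_{E_j < 0}$ and $1 - p_j^D(E_j) \to \mathbbm{1}_{E_j \geq 0}$; and since also $\frac{E_j + \Delta A_j^e + \bar{p}_j}{\bar{p}_j} \to \frac{E_j + \bar{p}_j}{\bar{p}_j}$, the $\rho$-integrand converges to $\left(\frac{E_j + \bar{p}_j}{\bar{p}_j}\right)^+\mathbbm{1}_{E_j < 0}$, using that the positive part is continuous and absorbs the boundary value. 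Passing to the limit in \eqref{eq:genDRVal} thus gives
\begin{equation}
\mathbb{V}_{ij}(E_j) \;\longrightarrow\; \mathbbm{1}_{E_j \geq 0} + \beta\left(\frac{E_j + \bar{p}_j}{\bar{p}_j}\right)^+\mathbbm{1}_{E_j < 0},
\end{equation}
which is precisely the EN interbank valuation function of Proposition \ref{pr:en} (recovered exactly when $\beta = 1$, the parameter $\beta$ otherwise acting as an exogenous recovery rate layered on top of the endogenous one $\rho_j$). The two exceptional arguments $E_j = 0$ and $E_j = -\bar{p}_j$ lie in a set of measure zero and, in any case, both limiting functions already coincide there (equal to $1$ and $0$ respectively, consistently with $\mathbbm{V}$ being continuous from above), so no separate treatment is needed; if pointwise convergence at every $E_j$ is wanted, it is enough to add the mild assumption that the law of $\Delta A_j^e$ carries no atom at $-E_j$, which holds for any non-degenerate diffusion.

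The main obstacle is the interchange of limit and expectation at the discontinuity points of the indicator integrands — the maps jump exactly at $\Delta A_j^e = -E_j$ (and at $-\bar{p}_j - E_j$ for $\rho_j$) — so weak convergence of the law of $\Delta A_j^e$ to $\delta_0$ is, by itself, not enough there: one needs either almost-sure convergence of the increments together with the (trivial, since the integrand is bounded) dominating function, or the no-atom condition at those points. Everything else in the argument — computing the two pointwise limits and recognising the outcome as the EN valuation function after simplifying the positive part — is elementary.
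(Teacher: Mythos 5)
Your proof follows the same route as the paper's: let $\Delta A_j^e \to 0$ as $t \to T$ and pass the limit inside the expectation by dominated convergence (the integrand is bounded by $1$), obtaining $p_j^D(E_j)\to\mathbbm{1}_{E_j<0}$ and $\rho_j(E_j)\to\left(\tfrac{E_j+\bar p_j}{\bar p_j}\right)^+\mathbbm{1}_{E_j<0}$. The paper's proof is a one-liner that does not mention the discontinuity points, so your treatment is more explicit about where the interchange needs care. However, the ``no-atom'' patch you offer for the boundary argument $E_j=0$ does not actually close the gap there: the discontinuity of $\mathbbm{1}_{\Delta A_j^e<-E_j}$ sits exactly at the value $0$ to which the law of $\Delta A_j^e$ concentrates, so absence of an atom at any fixed $t<T$ gives no control over $\lim_{t\to T}P(\Delta A_j^e<0)$. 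For a driftless geometric Brownian motion, for instance, this limit is $1/2$, not $\mathbbm{1}_{0<0}=0$, so $p_j^D(0)\not\to 0$ and your claim that ``both limiting functions already coincide'' at $E_j=0$ does not follow from the reasoning given.

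What actually rescues pointwise convergence at $E_j=0$ (and at $E_j=-\bar p_j$) is that for $\beta=1$, which is the EN case of Proposition \ref{pr:en}, the combined integrand $\mathbbm{1}_{E\geq 0}+\left(\tfrac{E+\bar p_j}{\bar p_j}\right)^+\mathbbm{1}_{E<0}=\min\left(1,\left(\tfrac{E+\bar p_j}{\bar p_j}\right)^+\right)$ is globally continuous in $E$; the jump of $1-p_j^D$ exactly cancels that of $\rho_j$, and the expectation then converges by the portmanteau theorem (or DCT applied to a continuous bounded integrand). For $\beta<1$ the pointwise convergence genuinely fails at $E_j=0$. Aside from this boundary detail — which the paper itself skips — your argument is correct and is essentially the paper's.
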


\subsection{Ex-ante valuation with geometric Brownian motion} \label{subsec:bm}
\noindent We now explicitly compute the probability of default and the endogenous recovery rate in \eqref{eq:pd_rho} assuming that external assets follow independent geometric Brownian motions:
\begin{equation} \label{eq:gbm}
\mathrm{d} A_i^e(s) = \mu_i A_i^e(s) \mathrm{d} s +  \sigma_i A_i^e(s) \mathrm{d} W_i(s) \qquad \forall s \in [t, T], \, i.
\end{equation}
The probability density function of $\Delta A_i^e$ in the measure $\mathbb{Q}$ is:
\begin{equation} \label{eq:logndist}
p(\Delta A_i^e) = \frac{1}{\sqrt{2\pi (T-t)}\sigma_i (\Delta A_i^e + A_i^e(t))}e^{\frac{-\left[\log\left(1+\frac{\Delta A_i^e}{A_i^e(t)}\right)+\frac{1}{2}\sigma_i^2(T-t)\right]^2}{2\sigma_i^2(T-t)}} \, .
\end{equation}
From \eqref{eq:pd_rho} we then have:
\begin{subequations} \label{eq:pd_rho_gbm}
\begin{equation}\label{eq:pd_gbm}
p_{j}^D(E_j(t)) = \frac{1}{2} \left[ 1 + \mathrm{erf} \left[\frac{\log(1-E_j(t)/A_j^e(t)) + \sigma_j^2 (T-t)/2}{\sqrt{2(T-t)} \sigma_j} \right] \right] \mathbbm{1}_{E_j(t)<A_j^{e}(t)}
\end{equation}
\begin{equation}\label{eq:rho_gbm}
\rho_{j}(E_j(t)) = \left( 1+\frac{E_j(t)}{\bar{p}_j(t)} \right) \left[ p_j^D(E_j(t)) - p_j^D(E_j(t) + \bar{p}_j(t)) \right] + \frac{A_j^{e}(t)}{2 \bar{p}_j}c_{j}(E_j(t))
\end{equation}
with
\begin{align*}
c_{j}(E_j(t)) =& -\text{erf}\left[{\frac{\sigma_j^2(T-t)/2 - \log\left(1-E_j(t)/A_j^e(t)\right)}{\sqrt{2(T-t)}\sigma_j}}\right] \mathbbm{1}_{E_j(t)<A_j^e(t)} \\
&-\text{erf}\left[{\frac{\sigma_j^2(T-t)/2 + \log\left(1-E_j(t)/A_j^e(t)\right)}{\sqrt{2(T-t)}\sigma_j}}\right] \mathbbm{1}_{E_j(t)<A_j^e(t)} \\
&+\text{erf}\left[{\frac{\sigma_j^2(T-t)/2 + \log\left(1-(E_j(t)+\bar{p}_j(t))/A_j^e(t)\right)}{\sqrt{2(T-t)}\sigma_j}}\right] \mathbbm{1}_{E_j(t)<A_j^e(t)-\bar{p}_j(t)} \\
&+\text{erf}\left[{\frac{\sigma_j^2(T-t)/2 - \log\left(1-(E_j(t)+\bar{p}_j(t))/A_j^e(t)\right)}{\sqrt{2(T-t)}\sigma_j}}\right] \mathbbm{1}_{E_j(t)<A_j^e(t)-\bar{p}_j(t)} \, .
\end{align*}
\end{subequations}
By plugging \eqref{eq:pd_rho_gbm} into \eqref{eq:genDRVal} it is easy to show that $\mathbb{E}_{\mathbb{Q}}[ \mathbb{V}_{ij}^{(\mathrm{EN})}(E_j(T)) | \mathcal{F}(t)]$ is actually a continuous function of $E_j(t)$ (both from above and from below), and therefore a feasible valuation function.

\subsection{Stress testing: Merton vs network valuation }
\noindent As a proof of concept, here we carry out a stress test on a small financial system composed by three banks, $A$, $B$, $C$. We choose a simple ring topology, $A \to B \to C \to A$ with the following parameters:
\begin{equation}
\mathbf{A}^e(t) = \begin{pmatrix}
       10 \\
       4  \\
       1.5
     \end{pmatrix}
\qquad
\mathbf{L}^e(t) = \begin{pmatrix}
       9  \\
       3  \\
       0.5
     \end{pmatrix}
\qquad
A(t) = \begin{pmatrix}
        0 & 0.8 & 0 \\
        0 & 0 & 0.8 \\
        0.8 & 0 & 0
       \end{pmatrix} \, ,
\end{equation}
so that all three banks have a book value of their equity equal to one. Total leverages, defined as the ratio between total assets and book values of equity, range from 10.8 to 2.3. Our stress test consists of applying an exogenous shock to the external assets of all banks, resulting in a devaluation, in relative terms, by a factor $\alpha$, i.e.\ $A^e_i(t) \to (1-\alpha) A^e_i(t)$. The variation in external assets of bank $i$, measured as the difference between its external assets before the shock and its external assets after the shock is $\Delta A^e_i = \alpha A^e_i(t)$. Using \eqref{eq:balance_sheet} we can readily compute the corresponding variation in equity, again measured as the difference between the equity before the shock (i.e.\ its book value) and the equity after the shock: $\Delta E_i =  \alpha A^e_i(t) + \sum_j A_{ij}(t)\left[ 1 - \mathbb{V}_{ij}(E_j^*(t)) \right]$. 
The network contribution can be quantified as the total losses in the system minus the losses directly caused by the exogenous shock: $\sum_i \Delta E_i - \Delta A^e_i = \sum_{ij} A_{ij}(t)\left[1 - \mathbb{V}_{ij}(E_j^*(t)) \right]$, which can be conveniently normalised by its maximum, $\sum_{ij} A_{ij}(t)$:
\begin{equation} \label{eq:net_eff}
\frac{\sum_i \Delta E_i - \Delta A^e_i}{\sum_{ij} A_{ij}(t)} = \frac{\sum_{ij} A_{ij}(t) \left[1 - \mathbb{V}_{ij}(E_j^*(t))\right]}{\sum_{ij} A_{ij}(t)} \, .
\end{equation}
In the left panel of figure \ref{fig:illustrative} we show the behaviour of the quantity \eqref{eq:net_eff} as a function of the exogenous shock of magnitude $\alpha$ on external assets, for several valuation functions. For Furfine we take the exogenous recovery rate $R = 0$, while for ex-ante EN we take \eqref{eq:gbm} with external assets volatility $\sigma_i (T - t) = 0.5$, for all banks. 
Interestingly, we can see that the network contribution for ex-ante EN is always larger than for EN.
We point out that this behaviour is not in contrast with the behavior shown in figure \ref{fig:ibeval}, where the EN interbank valuation function becomes smaller than the ex-ante EN interbank valuation function for sufficiently small values of equities (which correspond to sufficiently large shocks).
In fact, in figure \ref{fig:ibeval} the ex-ante EN interbank valuation function is computed for fixed external assets $\mathbf{A}^e(t)$.
In contrast, in figure \ref{fig:illustrative} the ex-ante EN interbank valuation functions are computed at the post-shock values of external assets $\mathbf{A}^e(t)$, which obviously vary with the exogenous shock to external assets.

Another way to assess the extent of the network contribution is the following. Let us imagine that each bank wants to valuate the interbank assets of its counterparty using the standard Merton approach. This amounts to using the valuation function \eqref{eq:genDRVal} and evaluating it in the book value of the equity of the counterparty. Hence, the lender $i$ discounts its interbank assets $A_{ij}(t)$ towards the borrower $j$ by a factor $\mathbb{V}_{ij}(M_j(t))$. If the same valuation is performed using using ex-ante EN, the discount factor equals to $\mathbb{V}_{ij}(E_j^*(t))$. In the right panel of figure \ref{fig:illustrative} we show the difference between such discount factors, i.e.\ between the discount factor of the valuation of an interbank claim performed with the standard Merton approach and the valuation of an interbank claim performed with ex-ante EN valuation functions \eqref{eq:gbm} with $\sigma_i (T - t) = 0.5$, for all banks. 
In this example, from the right panel of figure \ref{fig:illustrative} we can see that for bank C, which holds a claim towards bank A, such difference can be larger than 60\% (when $\alpha =1$). 
Since the book value of the interbank asset held by bank $C$ is equal to 0.8, by using the Merton model we would overestimate its 
value, in absolute terms, by $0.8 \cdot 0.6 \approx 0.5$, which is about 50\% of the book value of bank $C$'s equity (which is equal to one in this example).
Furthermore, the larger the shock to external assets, the larger the difference between the two discount factors.

\begin{figure}
\centering
\includegraphics[width=\columnwidth]{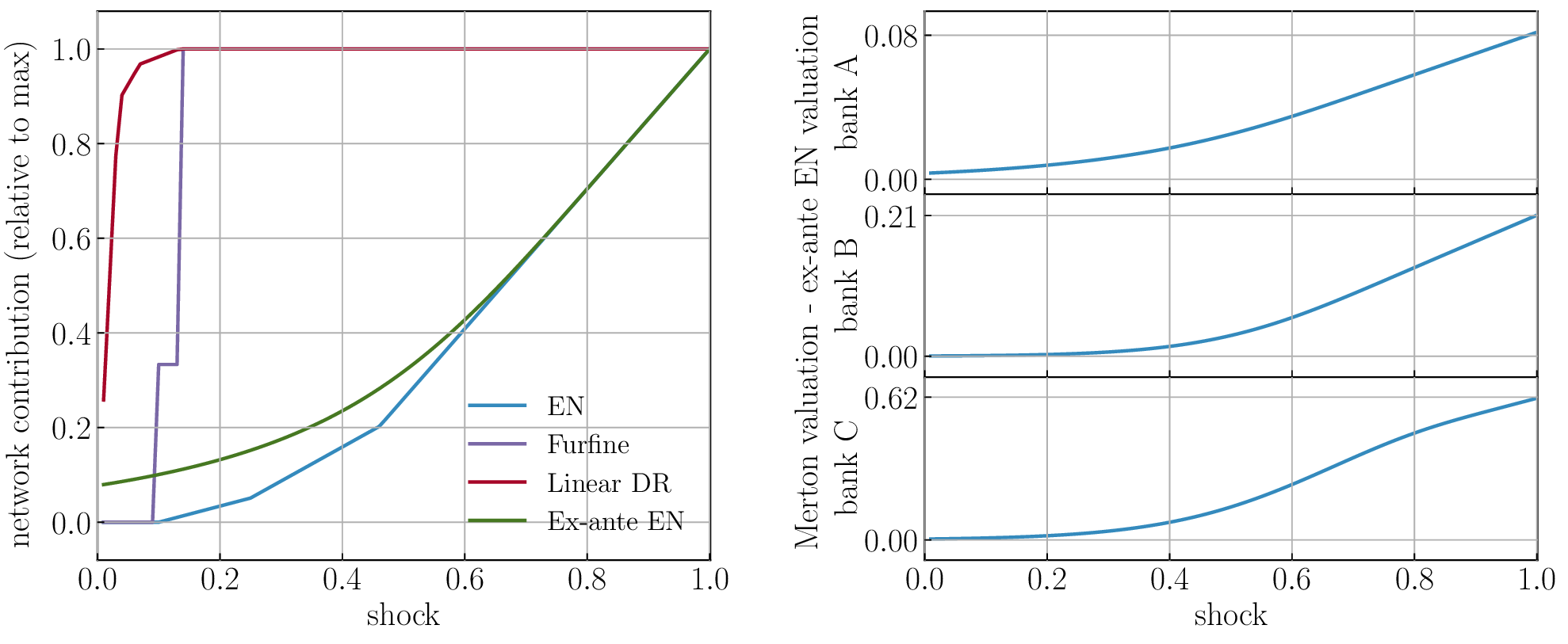}
\caption{Stress test consisting in applying an exogenous shock to external assets of all banks and by re-evaluating interbank claims. Left panel: network contribution (measured as the relative loss due to the revaluation of interbank assets) as a function of the exogenous shock, for several valuation functions. Right panel: for each of the three banks in the example network, the difference is shown between the discount factor relative to the valuation of their interbank claim performed with the standard Merton approach and the discount factor relative to the valuation of their interbank claim performed with ex-ante EN.}
\label{fig:illustrative}
\end{figure}

\section{Conclusions} \label{sec:Conclusions}
\noindent In this paper, we introduce a general framework to perform an ex-ante and network-adjusted valuation of financial institutions' interbank claims. On the one hand, our framework encompasses some of the most widely used models of financial contagion \citep{eisenberg2001systemic,furfine2003quantifying,rogers2013failure,bardoscia2015debtrank}, in the precise sense that the model is equivalent to those models for specific choices of the valuation functions and the parameters. 
On the other hand, our framework relates also to the stream of literature \citep{fischer2014no-arbitrage,suzuki2002valuing} carrying out the valuation of claims \`{a} la Merton when cross-holdings of debt exist between different firms. 

Our main result is that, under mild assumptions about valuation functions, the valuation problem admits a greatest solution, i.e.\ a solution in which the equities of all banks are maximal. 
Moreover, we provide a simple iterative algorithm to compute such solution.
Finally, we show how an ex-ante valuation model can be derived from a clearing model, i.e.\ from a model in which the valuation is performed at maturity.

A natural application of our framework is in devising \emph{stress-tests} to assess losses on banks' portfolios in a network of liabilities, conditional to shocks on their external assets in order to determine capital requirements and value at risk. Indeed, to any given shock on the external assets of the banks it corresponds a different valuation of banks' equities. Therefore, by assuming a known distribution of shocks, one can derive a corresponding distribution of equity losses. Such distribution can then be taken as the input of any axiomatic risk measure. 
Finally, one could embed our framework into a full-fledged economy with non-financial firms, households, and a government. 
Indeed, \cite{gray2010measuring} build an extension of the Merton model that, while abstracting from the network of individual firms, focuses on interconnections between sectorial balance sheets, thereby allowing to discuss the main transmission channels between them.

\section*{Data Availability Statement}
\noindent Data sharing is not applicable to this article as no new data were created or analyzed in this study. A Python implementation of the network valuation framework, including all valuation functions introduced here, is available at: \url{https://github.com/marcobardoscia/neva}.

\section*{Acknowledgments}
\noindent 
All authors contributed to the concept of the study. 
PB, MB, and FC derived the results. MDE, GV, GC, and SB reviewed the results and their significance. 
PB, MB and SB prepared the manuscript. 
PB, MB, FC, MDE, GC, and SB edited and revised the manuscript.
The first version of this paper was prepared when PB, MB, MDE, and GV were employed by the Department of Banking and Finance of the University of Zurich.
PB and MB thank the London Institute for Mathematical Sciences for material support in the early stages of this work.
PB, MB, MDE, GV, SB, and GC acknowledge funding from the European Union's FP7 and Horizon 2020 research and innovation programmes under grant agreements nr.\ 610704 (FET Project SIMPOL) and nr.\ 640772 (FET Project DOLFINS).
SB acknowledges the Swiss National Fund Professorship grant nr.\ PP00P1-144689. 
GC acknowledges the funding of the European Union's FP7 and Horizon 2020 research and innovation programmes under grant agreements nr.\ 317532 (FET IP Project MULTIPLEX), nr.\ 654024 (SoBigData), nr.\ 676547 (CoeGSS), nr.\ 952026 (HumanE-AI-Net), and nr.\ 871042 (SoBigData++). 
FC acknowledges support of the Economic and Social Research Council (ESRC) in funding the Systemic Risk Centre (ES/K002309/1).

%\section*{References}
\bibliographystyle{apalike}
\bibliography{finNetBib}

\section*{Appendix: Proofs of Theorems and Propositions}

\setcounter{equation}{0}
\renewcommand{\theequation}{A.\arabic{equation}}

\noindent For the sake of readability, in the proofs we suppress the explicit dependence on $t$ whenever there is no risk of ambiguity.

\renewcommand{\thetheorem}{\ref{Tarski}}
\begin{theorem}[Existence of greatest and least solution]
If all valuations functions in the map $\Phi$ take values in $[0,1]$ and are non-decreasing, the set of equations \eqref{eq:compact} admits a greatest solution $\mathbf{E}^{\mathrm{max}}(t)$ and a least solution $\mathbf{E}^{\mathrm{min}}(t)$.
\end{theorem}
\begin{proof}
To prove it we just need to show that: (a) the function $\Phi$ maps a complete lattice into itself, $\Phi : \Lambda \rightarrow \Lambda$, (b) the function $\Phi$ is an order-preserving function.
To prove (a) we notice that if valuation functions are feasible then:
$$
\forall \mathbf{E} \in \mathbbm{R}^n \quad  m_i = -L^{e}_i - \sum_{j}L_{ij}\leq \Phi_i(\mathbf{E}) \leq A^{e}_i-L^{e}_i + \sum_{j}A_{ij} - \sum_{j}L_{ij} = M_i
$$
and consequently $\Lambda=\varprod_{i=1}^n [m_i,M_i]$ is a complete lattice such that $\Phi : \Lambda \rightarrow \Lambda$, that proves (a). Since $\Phi$ is a linear combination of monotonic non-decreasing functions in $\mathbf{E}$, then  $\forall \mathbf{E},\mathbf{E'}$ if $\mathbf{E} \leq \mathbf{E}'$, follows $\Phi(\mathbf{E}) \leq \Phi(\mathbf{E'})$, where the partial ordering relation in $\Lambda$ is component-wise, i.e. $\mathbf{x}\leq\mathbf{y}$ iff $\forall i$ $x_i \leq y_i$. So both conditions (a) and (b) hold and the Knaster-Tarski theorem applies. The set of solutions $S$ of \eqref{eq:compact} is then a complete lattice, therefore it is non-empty (the empty set cannot contain its own supremum) and, more importantly, it admits a supremum solution, $\mathbf{E}^{\mathrm{max}}$, and an infimum solution, $\mathbf{E}^{\mathrm{min}}$, such that $\forall \mathbf{E}^{*} \in S$, $\mathbf{E}^{\mathrm{min}} \leq \mathbf{E}^{*} \leq \mathbf{E}^{\mathrm{max}}$. 
\end{proof}

\renewcommand{\thetheorem}{\ref{th:convergence_greatest}}
\begin{theorem}[Convergence to the greatest solution]
If all valuation functions in the map $\Phi$ are feasible and if $\mathbf{E}^{(0)}(t) = \mathbf{M}(t)$, then:
\begin{enumerate}
\item the sequence $\{ \mathbf{E}^{(k)}(t) \}$ is monotonic non-increasing: $\forall k \geq 0$, $\mathbf{E}^{(k+1)}(t) \leq \mathbf{E}^{(k)}(t)$,
\item the sequence $\{ \mathbf{E}^{(k)}(t) \}$ is convergent: $\lim_{k\to \infty} \mathbf{E}^{(k)}(t) = \mathbf{E}^{\infty}(t)$,
\item $\mathbf{E}^{\infty}(t)$ is a solution of \eqref{eq:compact} and furthermore $\mathbf{E}^{\infty}(t) = \mathbf{E}^{\mathrm{max}}(t)$.
\end{enumerate}
\end{theorem}
\begin{proof}
Convergence will be proved by induction. For $n = 0$ we have
$$ \mathbf{E}^{(1)} = \Phi(\mathbf{E}^{(0)}) \leq \mathbf{M} = \mathbf{E}^{(0)} $$
Assume now that the claim is true for all $0 \leq m \leq n$, then
$$ \mathbf{E}^{(n+1)} = \Phi(\mathbf{E}^{(n)}) \leq \Phi(\mathbf{E}^{(n - 1)}) = \mathbf{E}^{(n)}$$
where we have used the fact that $\Phi$ is monotonic non-decreasing and $\mathbf{E}^{(n)} \leq \mathbf{E}^{(n - 1)}$ by hypothesis,
We know that $\{\mathbf{E}^{(n)}\}$ is bounded below and monotonic non-increasing, by the Monotone Convergence Theorem we have that $\mathbf{E}^{*} = \lim_{n\rightarrow\infty} \mathbf{E}^{(n)} = \inf_n\{\mathbf{E}^{(n)}\}$ exists and is finite.
By hypothesis $\Phi$ is continuous from above because under assumptions of Theorem (\ref{th:convergence_greatest}) we know that the valuation functions are feasible, hence
$$ \Phi(\mathbf{E}^{*}) = \Phi(\lim_{n} \mathbf{E}^{(n)}) = \lim_n \Phi(\mathbf{E}^{(n)}) = \lim_n \mathbf{E}^{(n+1)} = \mathbf{E}^{*}$$
So that $\mathbf{E}^{*} \in S$.
We will now prove it must be that $\mathbf{E}^{*} = \mathbf{E}^{\mathrm{max}}$.
First we need to establish a preliminary result, namely that $\mathbf{E}^{(n)} \geq \mathbf{E}^{\mathrm{max}}, \forall n$.
Reasoning by induction, it is trivially true for the initial point that $\mathbf{E}^{(0)}\geq \mathbf{E}^{\mathrm{max}}$.
Suppose now that it is true up to a given $\bar{n}$, $\mathbf{E}^{(\bar{n})} \geq \mathbf{E}^{\mathrm{max}}$ then, since $\Phi$ is order-preserving,
$$ \mathbf{E}^{(\bar{n}+1)} = \Phi(\mathbf{E}^{(\bar{n})}) \geq \Phi(\mathbf{E}^{\mathrm{max}}) = \mathbf{E}^{\mathrm{max}}$$
Now, knowing that $\mathbf{E}^{(n)} \geq \mathbf{E}^{\mathrm{max}}, \forall n$ we have that $\mathbf{E}^{*} = \inf_n\{\mathbf{E}^{(n)}\} \geq \mathbf{E}^{\mathrm{max}}$. But $\mathbf{E}^{*} \in S$, hence $\mathbf{E}^{*} = \mathbf{E}^{\mathrm{max}}$. 
\end{proof}

\renewcommand{\thetheorem}{\ref{th:convergence_least}}
\begin{theorem}[Convergence to the least solution]
If all valuations functions in the map $\Phi$ take values in $[0,1]$, are non-decreasing, and continuous from below, and if $\mathbf{E}^{(0)}(t) = \mathbf{m}(t)$, then:
\begin{enumerate}
\item the sequence $\{ \mathbf{E}^{(k)}(t) \}$ is monotonic non-decreasing: $\forall k \geq 0$, $\mathbf{E}^{(k+1)}(t) \geq \mathbf{E}^{(k)}(t)$,
\item the sequence $\{ \mathbf{E}^{(k)}(t) \}$ is convergent: $\lim_{k\to \infty} \mathbf{E}^{(k)}(t) = \mathbf{E}^{\infty}(t)$,
\item $\mathbf{E}^{\infty}(t)$ is a solution of \eqref{eq:compact} and furthermore $\mathbf{E}^{\infty}(t) = \mathbf{E}^{\mathrm{min}}(t)$.
\end{enumerate}
\end{theorem}
\begin{proof}
Convergence will be proved by induction. For $n = 0$ we have
$$ \mathbf{E}^{(1)} = \Phi(\mathbf{E}^{(0)}) \geq \mathbf{m} = \mathbf{E}^{(0)} $$
Assume now that the claim is true for all $0 \leq m \leq n$, then
$$ \mathbf{E}^{(n+1)} = \Phi(\mathbf{E}^{(n)}) \geq \Phi(\mathbf{E}^{(n - 1)}) = \mathbf{E}^{(n)}$$
where we have used the fact that $\Phi$ is monotonic non-decreasing and $\mathbf{E}^{(n)} \geq \mathbf{E}^{(n - 1)}$ by hypothesis.
We know that $\{\mathbf{E}^{(n)}\}$ is bounded above and monotonic non-decreasing, by the Monotone Convergence Theorem we have that $\mathbf{E}^{*} = \lim_{n} \mathbf{E}^{(n)} = \sup_n\{\mathbf{E}^{(n)}\}$ exists and is finite.
By hypothesis $\Phi$ is continuous from below, hence
$$ \Phi(\mathbf{E}^{*}) = \Phi(\lim_{n} \mathbf{E}^{(n)}) = \lim_{n\rightarrow\infty} \Phi(\mathbf{E}^{(n)}) = \lim_{n\rightarrow\infty}  \mathbf{E}^{(n+1)} = \mathbf{E}^{*}$$
So that $\mathbf{E}^{*} \in S$.
We will now prove it must be that $\mathbf{E}^{*} = \mathbf{E}^{\mathrm{min}}$.
First we need to establish a preliminary result, namely that $\mathbf{E}^{(n)} \leq \mathbf{E}^{\mathrm{min}}, \forall n$.
Reasoning by induction, it is trivially true for the initial point that $\mathbf{E}^{(0)}\leq \mathbf{E}^{\mathrm{min}}$.
Suppose now that it is true up to a given $\bar{n}$, $\mathbf{E}^{(\bar{n})} \leq \mathbf{E}^{\mathrm{min}}$ then, since $\Phi$ is order-preserving,
$$ \mathbf{E}^{(\bar{n}+1)} = \Phi(\mathbf{E}^{(\bar{n})}) \leq \Phi(\mathbf{E}^{\mathrm{min}}) = \mathbf{E}^{\mathrm{min}}$$
Now, knowing that $\mathbf{E}^{(n)} \leq \mathbf{E}^{\mathrm{min}}, \forall n$ we have that $\mathbf{E}^{*} = \sup_n\{\mathbf{E}^{(n)}\} \leq \mathbf{E}^{\mathrm{min}}$. But $\mathbf{E}^{*} \in S$, hence $\mathbf{E}^{*} = \mathbf{E}^{\mathrm{min}}$. 
\end{proof}

\renewcommand{\theproposition}{\ref{pr:dag}}
\begin{proposition}[DAG]
If the matrix defined by interbank assets $A_{ij}(t)$ is the adjacency matrix of a DAG and $\mathbb{V}_i^e(E_i(t)) = 1$, $\forall i$:
\begin{enumerate}
\item the map \eqref{eq:iter_map} converges in a finite number of iterations,
\item the solution of \eqref{eq:compact} is unique.
\end{enumerate}
\end{proposition}
\begin{proof}
We define source banks as those banks that do not hold interbank assets, i.e.\ $S_0=\{i :\, A_{ij}=0, \forall j \}$, which is a non-empty set if the matrix of interbank exposures is a DAG. We then partition banks based on the maximum graph distance from the set of source banks $S_0$, the partition being $\{S_d\}_{d=0}^{d_{\mathrm{max}}}$. Starting from the initial condition $\mathbf{M}$, banks in $S_0$ converge in zero iterations to their book value as their equity does not depend on the equity of any other bank (neither their own). Banks in $S_1$ converge in one iteration as their equity only depends on the equities of banks in $S_0$. By induction, banks in $S_{d_{\mathrm{max}}}$ converge in $d_{\mathrm{max}}$ iterations. Starting from the initial condition $\mathbf{m}$ banks in $S_0$ converge in one iteration to their book value as the Picard iteration algorithm corrects the value of their equities exactly in one iteration. Consequently, $\Phi^{(d_{\mathrm{max}})}(\mathbf{M})=\Phi^{(d_{\mathrm{max}}+1)}(\mathbf{m})$, and therefore all banks converge to $\mathbf{E}^{\mathrm{min}}=\mathbf{E}^{\mathrm{max}}$ in (at most) $d_{\mathrm{max}} + 1$ iterations.
\end{proof}

\renewcommand{\theproposition}{\ref{pr:en}}
\begin{proposition}[Eisenberg and Noe]
If $t = T$ and:
\begin{enumerate}
\item $\mathbb{V}_i^e(E_i(T)) = 1$, $\forall i$,
\item $\mathbb{V}_{ij}(E_j(T)) = \mathbbm{1}_{E_j(T) \geq 0} + \left( \frac{E_j(T) + \bar{p}_j(T)}{\bar{p}_j(T)} \right)^+ \mathbbm{1}_{E_j(T)<0}$, $\forall i$, $j$
\end{enumerate}
where $\bar{p}_j(T)= L_j^e(T) + \sum_{i}L_{ji}(T)$, there is a one-to-one correspondence between the solutions of \eqref{eq:compact} and the solutions of the map $\Phi$ introduced in \cite{eisenberg2001systemic}.
\end{proposition}
\begin{proof}
As already noted, in EN the valuation happens \emph{at} maturity, $t = T$. 
Under the assumptions of (i) limited liabilities, (ii) priority of debt over equity, (iii) proportional repayments, EN aims at computing a clearing payment vector $\mathbf{p}^*(T)$ whose component $p_i^*(T)$ is the total payment made by bank $i$ to its counterparties. 
To conform to their notation, we also introduce the obligation vector $\bar{\mathbf{p}}(T)$, defined as $\bar{p}_i(T) = L_i^e(T) + \sum_j L_{ij}(T)$, which is the total liabilities of bank $i$. \cite{eisenberg2001systemic} show that:
\begin{equation} \label{eq:en}
p_i^*(T) = \min \left[ e_i(T) + \sum_j L_{ji}(T) \frac{p_j^*(T)}{\bar{p}_j(T)}, \bar{p}_i(T) \right] \, ,
\end{equation}
where $e_i(T) = A_i^e(T)$. 
Eq.\ \eqref{eq:en} can be equivalently rewritten as:
\begin{subequations}
\begin{equation} \label{eq:p_star}
p_i^*(T) = \bar{p}_i(T) \mathbbm{1}_{E_i(\mathbf{p}^{*}(T)) \geq 0} + \left[ E_i(\mathbf{p}^{*}(T)) + \bar{p}_i(T) \right]^+ \mathbbm{1}_{E_i(\mathbf{p}^{*}(T))<0} \, ,
\end{equation}
with
\begin{equation} \label{eq:en_etilde}
E_i(\mathbf{p}(T)) = A_i^e(T) - L_i^e(T) + \sum_{j} A_{ij}(T) \frac{p_j(T)}{\bar{p}_j(T)} - \sum_j L_{ij}(T) \, .
\end{equation}
\end{subequations}
The above equations are equivalent to \eqref{eq:balance_sheet} by choosing the valuation functions as in the hypotheses of the Proposition \ref{pr:en}. 
In fact, when $E_j(T) > 0$, the cash inflow of bank $j$ is enough to cover its due payments, and therefore $\bar{\mathbf{p}}(T) = \mathbf{p}^*(T)$. In contrast, when $E_j(T) < 0$, bank $j$ employs its residual assets $\left[ E_j(T) + \bar{p}_j(T) \right]^+$ to repay its creditors proportionally as much as it can.
\end{proof}

\renewcommand{\theproposition}{\ref{pr:rv}}
\begin{proposition}[Rogers and Veraart] 
If $t = T$ and:
\begin{enumerate}
\item $\mathbb{V}_i^e(E_i(T)) = 1$, $\forall i$,
\item $\mathbb{V}_{ij}(E_j(T)) = \mathbbm{1}_{E_j(T) \geq 0} + \left[(\alpha - \beta) \frac{A_j^e(T)}{\bar{p}_j(T)}  + \beta \left( \frac{E_j(T) + \bar{p}_j(T)}{\bar{p}_j(T)} \right)^+ \right] \mathbbm{1}_{E_j(T)<0}$, $\forall i$, $j$
\end{enumerate}
where $\bar{p}_j(T)= L_j^e(T) + \sum_{i}L_{ji}(T)$, there is a one-to-one correspondence between the solutions of \eqref{eq:compact} and the solutions of the map $\Phi$ introduced in \cite{rogers2013failure}.
\end{proposition}
\begin{proof}
The proof is entirely analogous to the proof of Proposition \ref{pr:en}. 
Similarly to \eqref{eq:p_star}, payments as functions of equities are given by:
\begin{equation*}
\begin{split}
p_i^*(T) =& \, \bar{p}_i(T) \mathbbm{1}_{E_i(\mathbf{p}^{*}(T)) \geq 0} \\
&+ \left[(\alpha - \beta) A_i^e(T)  + \beta \left(  E_i(\mathbf{p}^{*}(T)) + \bar{p}_i(T) \right)^+ \right] \mathbbm{1}_{E_i(\mathbf{p}^{*}(T))<0}.
\end{split}
\end{equation*}
\end{proof}

\renewcommand{\theproposition}{\ref{pr:furfine}}
\begin{proposition}[Furfine]
If:
\begin{enumerate}
\item $\mathbb{V}_i^e(E_i(t)) = 1$, $\forall i$,
\item $\mathbb{V}_{ij}(E_j(t)) = \mathbbm{1}_{E_j(t) \geq 0} + R\mathbbm{1}_{E_j(t) < 0}$, $\forall i$, $j$,
\end{enumerate}
there is a one-to-one correspondence between the solutions of \eqref{eq:compact} and the solutions of the map $\Phi$ introduced in \cite{furfine2003quantifying}.
\end{proposition}
\begin{proof}
According to the Furfine algorithm a counterparty with non-negative equity is always able to fully repay its liabilities, while, if its equity is negative, it will only repay a fraction $R$ of them. This is exactly what the valuation function in Proposition \ref{pr:furfine} accounts for. 
\end{proof}

\renewcommand{\theproposition}{\ref{pr:dr}}
\begin{proposition}[Linear DebtRank]
If:
\begin{enumerate}
\item $\mathbb{V}_i^e(E_i(t)) = 1$, $\forall i$,
\item $\mathbb{V}_{ij}(E_j(t)) = \min \left[ \frac{E_j^+(t)}{M_j(t)}, 1 \right]$, $\forall i$, $j$,
\end{enumerate}
there is a one-to-one correspondence between the greatest solution of \eqref{eq:compact} and the solution of the recursive map (linear version of DebtRank) introduced in \cite{bardoscia2015debtrank}.
\end{proposition}
\begin{proof}
The easiest way to prove the correspondence is to start from $\mathbf{M}(t)$ and to compute the incremental variation of the iterative map \eqref{eq:iter_map}, which in this case reads: $E_i^{(k+1)}(t) - E_i^{(k)}(t) = \sum_j A_{ij}(t) \frac{\left(E_j^{(k)}(t)\right)^+ - \left(E_j^{(k-1)}(t)\right)^+}{M_j(t)}$, for all $i$. Starting the Picard iteration algorithm from $\mathbf{M}(t)$ we recover (7) in \cite{bardoscia2015debtrank}, in which $\mathbf{M}(t)$ has been denoted with $\mathbf{E}(0)$. As soon as the equity of bank $j$ becomes equal to zero in the iterative map in \cite{bardoscia2015debtrank}, it will not change anymore, which is consistent with the incremental variation derived above.
\end{proof}

\renewcommand{\theproposition}{\ref{pr:exante_to_en}}
\begin{proposition}
In the limit in which the maturity is approached, i.e.\ $t \to T$, the interbank valuation function \eqref{eq:genDRVal} converges to the interbank valuation function of EN (Proposition \ref{pr:en}).
\end{proposition}
\begin{proof}
First we notice that, as $t \to T$ the variation in external assets goes to zero with probability approaching one, and therefore from \eqref{eq:pd_rho} we have that $p_j^D(E_j(t)) \to \mathbbm{1}_{E_j(T)<0}$ and that $\rho_j(E_j(t)) \to \left(\frac{E_j(T) + \bar{p}_j(T)}{\bar{p}_j(T)}\right)^+\mathbbm{1}_{E_j(T)<0}$, from which the proposition easily follows.
\end{proof}

\end{document}